\DeclareMathAlphabet{\mathbb}{U}{msb}{m}{n}
  \def\corref#1{<#1>}%
\newtheorem{prop}{Proposition}
\theoremstyle{definition}
\newtheorem{definition}{Definition}
\DeclareMathOperator*{\argmin}{arg\,min}
\newcolumntype{H}{>{\setbox0=\hbox\bgroup}c<{\egroup}@{}}
\newcommand{\SP}[1]{\color{black}{{#1}} \color{black}} 
\newcommand{\SPP}[1]{\color{black}{{#1}} \color{black}}
\begin{document}


\begin{frontmatter}

    \journal{Journal of Econometrics
    }

    \title{Moving Aggregate Modified Autoregressive Copula-Based Time Series Models (MAGMAR-Copulas)}


    \author[1]{Sven Pappert
    \corref{cor1}}
    \ead{pappert@statistik.tu-dortmund.de}
    \cortext[cor1]{Corresponding author}
\address[1]{Chair of Econometrics, Department of Statistics, TU Dortmund University, Germany}

    \begin{abstract}
   \SPP{Copula-based time series models can model univariate and stationary time series in a flexible way by decomposing the joint distribution of consecutive observations into a copula and the stationary distribution. Implicitly this approach} assumes a finite Markov order. In reality a time series may not follow the Markov property. We modify the copula-based time series models by introducing a moving aggregate (MAG) part into the model updating equation. The functional form of the MAG-part is given as the conditional quantile function corresponding to a copula. The resulting MAG-modified Autoregressive Copula-Based Time Series model (MAGMAR-Copula) is discussed in detail and distributional properties are derived in a D-vine framework. \SPP{We show that the stationary distribution implied by the model is not standard-uniform. Hence we propose an adjustment transformation that recovers the desired standard-uniformity.} The model nests the classical ARMA model and can be interpreted as a non-linear generalization of the ARMA-model. The modeling performance is evaluated by modeling US inflation. Our model is competitive with benchmark models in terms of information criteria. 
    \end{abstract}
    \begin{keyword}
        Copula \sep Copula-based time series  \sep Dependence Modeling \sep Non-linear time series \sep Persistency \sep Vine Copula
    \end{keyword}
\end{frontmatter}

\section{Introduction}
\label{Section:Introduction}
\noindent Copula-based time series models are a class of \SPP{univariate} probabilistic autoregressive time series models which allow for flexible temporal dependence modeling as well as flexible marginal modeling.
The idea amounts to modeling (possibly non-linear) autoregressive temporal dependencies by a copula while modeling marginal properties independently. Given a univariate, strictly stationary time series \SPP{$\{X_t\}_{t\in\mathbb{Z}}$,} the model decomposes the joint distribution, $F_{X_t,\hdots,X_{t-p}}$ of $(p+1)$ consecutive random variables of the time series into their stationary marginal distribution, $F_X := F_{X_t} = F_{X_{t-1}} = \hdots = F_{X_{t-p}}$, and their copula $C = F_{F_{X}(X_t), \hdots, F_{X}(X_{t-p})}$. The marginal distribution, $F_X$, accounts for marginal properties such as mean, variance, skewness etc., while the copula accounts for the temporal dependence structure. The general idea was formulated by \cite{darsow1992copulas} for Markov chains and further investigated in (amongst others) \cite{ibragimov2005copula}. A short review is given in \cite{patton2009copula}. \cite{chen2006estimation} and \cite{beare2010copulas} developed the idea for time series. \cite{simard2015forecasting} explore multivariate copula-based time series forecasting with standard copulas while \cite{beare2015vine}, \cite{brechmann2015copar} and \cite{nagler2022stationary} explore multivariate time series modeling using vine-copulas. \\
These models can be used to model non-linearities in temporal dependencies such as asymmetry; see \cite{muller2022copula} for irradiation data and \cite{mcneil2022time} for US inflation. For modeling of heavy-tailedness in the temporal dependence, see \cite{loaiza2018time} and \cite{pappert2023forecasting}. Linear temporal dependencies may also be modeled simply by employing the normal copula.
Thus the model can be understood as a generalization to the AR model, which permits broader temporal dependencies and marginal distributions. \\
One caveat of the copula-based time series models is that, implicitly, the process is ascribed to be a Markov process of order $p$, i.e. the true data-generating process is of finite autoregressive order $p$. This means that the variable $X_t$ may depend on a finite number of past values $X_{t-1}, \hdots, X_{t-p}$ only. However, in reality the Markov assumption may fail depending on the time series. The time series that is to be modeled may display persistent behavior i.e. non cut-off autodependence. A popular example of this behavior is volatility clustering. There, persistency is apparent in the second moment of the time series. In principle it is possible to extend the copula-based time series model to arbitrary Markov order, however it becomes unfeasible as the number of parameters increases indefinitely. Hence, we think that there is a need for a copula-based time series model with infinite order partial dependence.
To this end, we take inspiration from classic time series models. The problem of limited feasible autoregressive orders in time series models also occurs in the AR($p$) and ARCH($p$) models. The solution for these models is to introduce moving average terms, resulting in the ARMA$(p,q)$ and GARCH$(p,q)$ models, respectively (see \cite{engle1982autoregressive}, \cite{bollerslev1986generalized}, \cite{bollerslev1994arch}, \cite{francq2019garch} and \cite{hamilton2020time}). By inverting the corresponding linear filter, AR($\infty$) and ARCH($\infty$) representations can be derived. Hence the time series can be understood as having no Markov restriction. This results in the time series models being more persistent in the mean in the case of ARMA and more persistent in the variance in the case of GARCH. The latter can be interpreted as the model being able to capture volatility clusters \citep{bollerslev1994arch}. See also Fig.~\ref{Fig:ARMA} and Fig.~\ref{Fig:GARCH}.
\begin{figure}[t]
\centering
\includegraphics[scale=0.5]{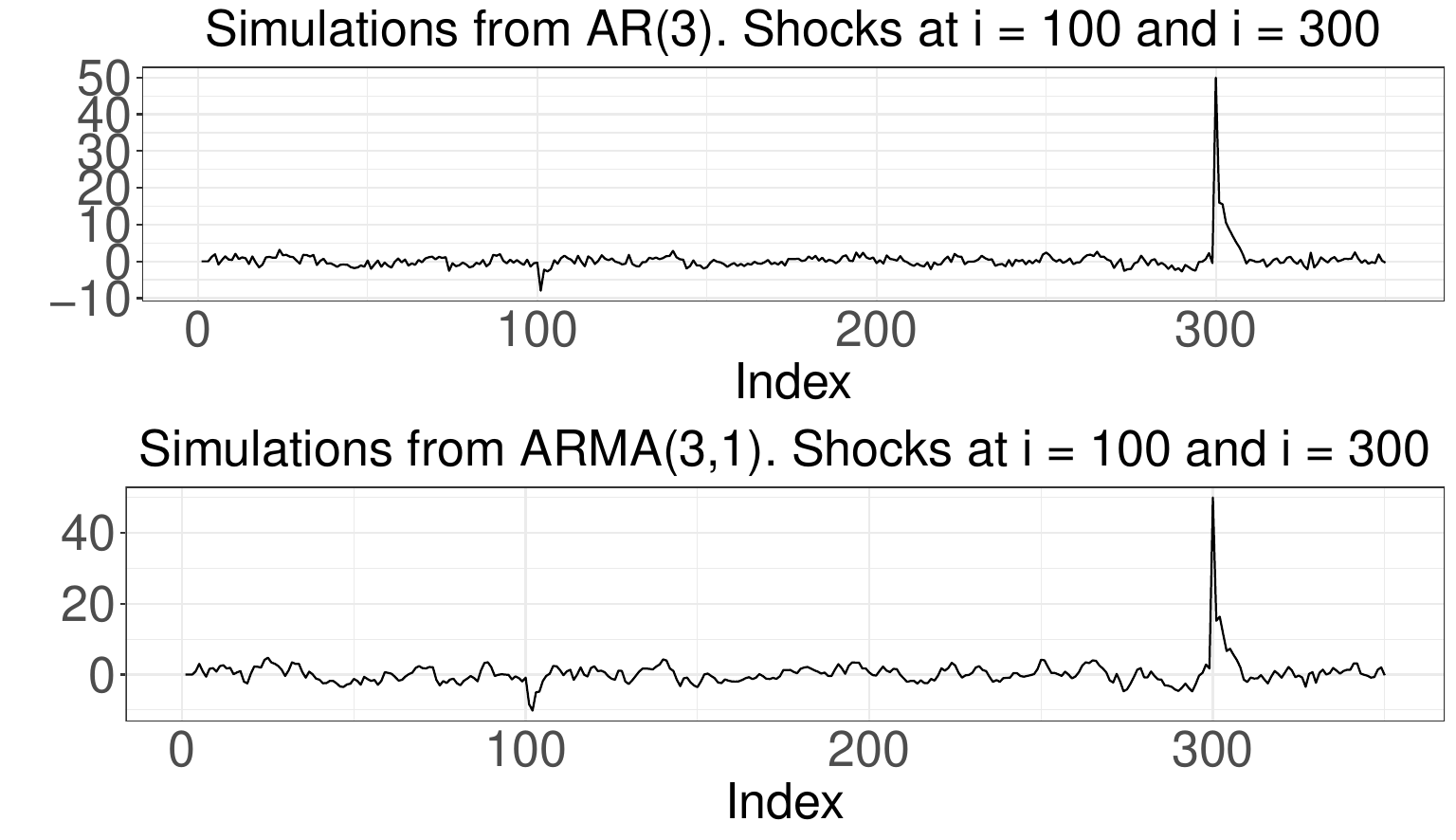}
\caption{Simulations of an AR$(3)$ and a ARMA$(3,1)$ process using the same random number generation. The ARMA process exhibits long-term autoregressive effects in the mean, i.e. smoothness.}
\label{Fig:ARMA}
\end{figure}

\begin{figure}[t]
\centering
\includegraphics[scale=0.5]{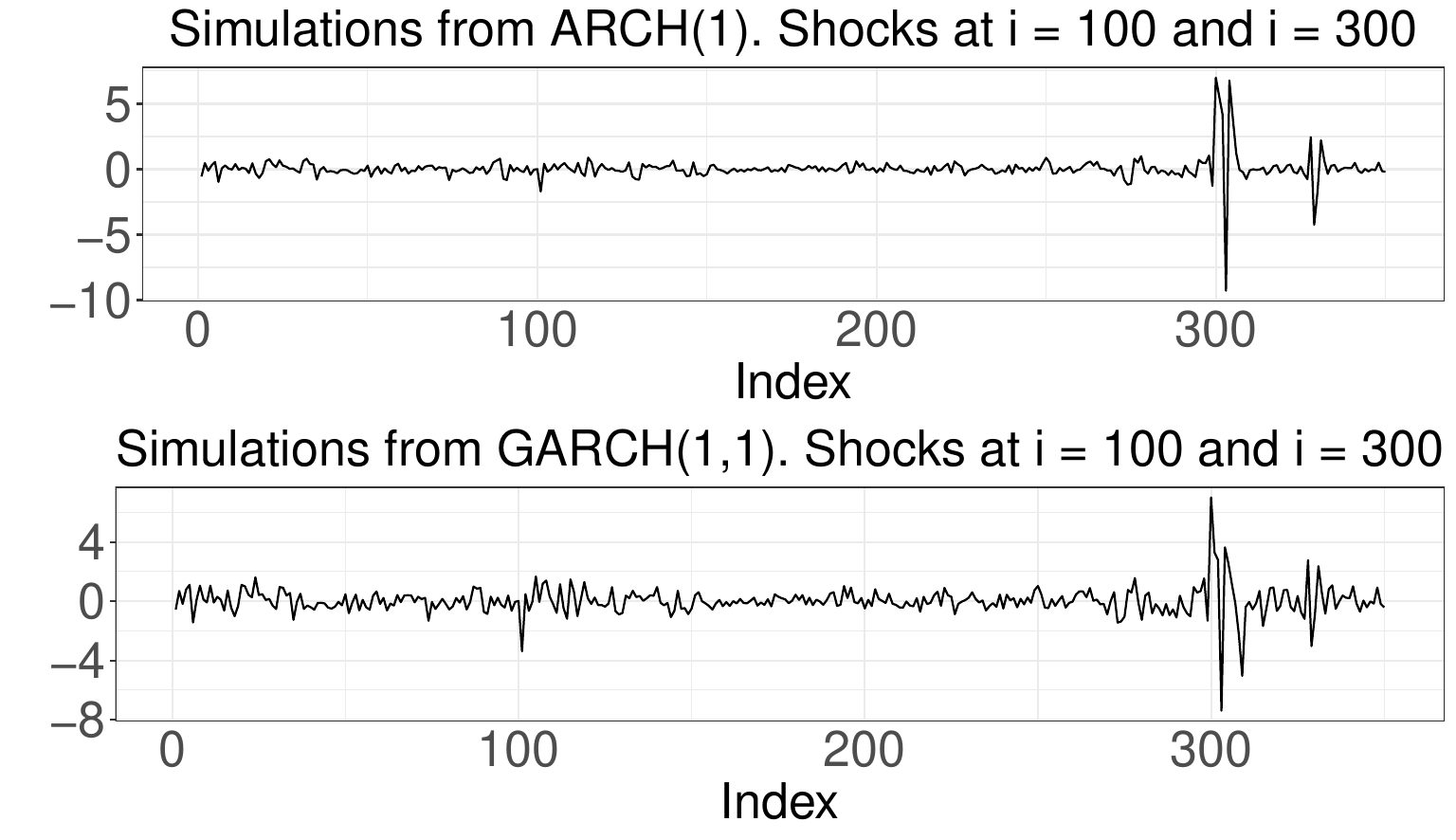}
\caption{Simulations of an ARCH$(1)$ and a GARCH$(1,1)$ process using the same random number generation. The GARCH process exhibits long-term autoregressive effects in the variance, i.e. volatility clustering.}
\label{Fig:GARCH}
\end{figure}
\noindent
Inspired by these approaches, we modify the copula-based time series model by incorporating a moving aggregate (MAG) part in the updating equation. \SP{This allows for long-term autoregressive effects in the whole conditional distribution.} The MAG part is a non-linear function taking the current and the past $q$ innovations as arguments. Formally it is a mapping from $(0,1)^{q+1}$ to $(0,1)$. The functional form of the MAG-function is given by a conditional quantile function corresponding to a copula. We call it 'moving aggregate', because, similar to the moving average, it aggregates the current and past $q$ innovations but it is \SPP{not a} linear aggregation, hence 'aggregate' instead of 'average'. Analogous to the classical ARMA and GARCH models, the MAG-part permits infinite partial dependence. This means that the partial copula \citep[see][]{spanhel2016partial} of $X_t$ and $X_{t-j}$ is in general different from the independence copula for all $j,t \in \mathbb{Z}$. We call the resulting model \textit{MAG-Modified Autoregressive Copula-Based Time Series Model}, or in short: \textit{MAGMAR-Copula}.
In this paper we present the model and show that the classical copula-based time series model and the classical linear ARMA model are nested within the proposed model. We proceed to derive its distributional properties in a D-vine framework (see \cite{aas2009pair, czado2010pair, czado2022vine} and Sect.~\ref{Subsection:Methods:Copulas} for D-vine copulas). AR($\infty$) and MAG($\infty$) representations are derived. Building upon the MAG($\infty$) representation, it is first proven that the \SPP{MAGMAR$(1,1)$-Copula} model, where the AR-copula is the normal copula, is stationary when excluding perfect comonotonicity and perfect countermonotonicity. Second, we establish stationarity and ergodicity under a contraction-on-average property of the time series. \SPP{In general the stationary distribution of the time series is not uniform (as is desired). We introduce an auxiliary adjustment transformation to account for this problem. The transformation guarantees that the unconditional distribution of the time series can always be modeled properly.}
We derive the conditional distribution of $X_t|\mathcal{F}_{t-1}$, where $\mathcal{F}_{t-1}$ denotes the $\sigma$-field of all information up to and including $(t-1)$. Based on conditional independence, the likelihood function is derived. Similar to ARMA and GARCH models, the likelihood has to be calculated in an iterative scheme. The algorithm is presented and asymptotic properties of the estimation are derived. Since asymptotic results for the estimation of Markov copula-based time series models are available in the literature (see \cite{chen2006estimation} or \cite{nagler2022stationary}), we focus on the consistency of the maximum-likelihood estimation of the MAG$(1)$-copula model. \SPP{We proceed to systematically compare our proposed model with the generalizations of copula-based time series by \cite{mcneil2022time} and \cite{joe2014dependence} (chapter 3.14). These models also aim to overcome the Markov restriction/generalize the linear ARMA model.}
\SPP{Last, we test the model in a simulation study and apply it to quarterly US inflation.} We compare the modeling performance of the MAGMAR-Copula model to the model by \cite{mcneil2022time} and to linear modeling (i.e. choosing normal copulas for the temporal dependence structure). We find that our model performs well in this setting and is competitive with the model from \cite{mcneil2022time}. In line with their results, we find that proper modeling of asymmetric temporal tail dependence as well as infinite partial dependence is beneficial. Furthermore, we find that incorporating heavy-tailed temporal dependence (i.e. incorporating the $t$-copula) enhances the performance, indicating not only asymmetric tail-dependence but also heavy-tailed dependencies in US inflation.\\
In the next section, Sect.~\ref{Section:Methods}, the copula-based time series model is introduced and reviewed \SPP{with a focus on D-vines. Furthermore related generalizations of copula-based time series models are reviewed.} In Sect.~\ref{Section:MAGMAR} the Moving Aggregate Modified Autoregressive Copula-Based Time Series Model (MAGMAR-Copula) is introduced \SPP{and theoretical properties are derived.} \SPP{In Sect.~\ref{Section:Simulation_Study}, we explore the model in a simulation study, verifying theoretical findings, extending them and comparing the model with related ones.} The MAGMAR-Copula model is applied to US Inflation in Sect.~\ref{Section:Application}. Sect.~\ref{Section:Conclusion} concludes the work and gives an outlook to future research.
\section{Copula-based Time Series Models}
\label{Section:Methods}
\noindent In this section the notion of copula-based time series models is introduced. First, we give a short introduction to copulas. Then, general ideas of copula-based time series modeling are developed. Last, we discuss the approaches \SPP{to generalize copula-based time series modeling} by \cite{mcneil2022time} and \cite{joe2014dependence}.
\subsection{Copulas and D-vine Copulas}
\label{Subsection:Methods:Copulas}
\noindent In this section we give a short introduction to copulas and the main ideas to multivariate modeling with copulas. For detailed and comprehensive presentations we refer to standard works: \cite{joe2014dependence} and \cite{nelsen2006introduction}. \\
Consider $p$ continuous random variables $X_1, \hdots, X_p$. Let their joint distribution be given by $F_{X_1,\hdots,X_p}$ and their marginal distributions by $F_{X_1}, \hdots, F_{X_p}$, respectively. The general idea of multivariate copula modeling amounts to decomposing the joint distribution into marginal distributions and a copula,
\begin{align}
F_{X_1,\hdots,X_p}(x_1, \hdots, x_p) = C[F_{X_1}(x_1), \hdots, F_{X_p}(x_p)].
\label{Sklar}
\end{align}
Sklar's theorem states that there exists a copula that fulfills the statement from Eq.~\ref{Sklar}, which is unique if the random variables are continuous \citep{sklar1959fonctions, joe2014dependence}. The copula $C$ is the joint distribution of probability integral transformed (PIT) and hence uniform random variables $F_{X_1}(X_1), \hdots, F_{X_p}(X_p)$ \citep[see][for the PIT]{angus1994probability}, $F_{F_{X_1}(X_1), \hdots, F_{X_p}(X_p)} = C$. The uniformity of $U_i := F_{X_i}(X_i), \ i \in \{1, \hdots, p\}$ allows the copula to be interpreted as the dependence structure of the random variables $X_1,\hdots,X_p$. An important notion for visualization, estimation and for D-vine copulas is the copula density. The copula density of a copula $C$ is given as \citep{joe2014dependence}
\begin{align}
c[u_1, \hdots, u_p] = \frac{\partial^p C[u_1, \hdots, u_p]}{\partial u_1 \hdots \partial u_p}.
\label{CopulaDensity}
\end{align}
Since the copula density is given by the partial derivative of the respective copula, Eq.~\ref{CopulaDensity} is only valid if the copula is differentiable. In applications, the copula and the marginal distributions have to \SPP{be} chosen appropriately. The normal, $t$, Gumbel and independence copulas are described in \ref{App:Subsection:Copulas_in_this_work}.\\
Now we introduce the notion of D-vine copulas. D-vine copulas are a special case of pair-copula constructions. A $p$-variate copula is decomposed into pair-copulas, i.e. bivariate copulas \citep{bedford2001probability, bedford2002vines, aas2009pair, czado2010pair, czado2022vine}. The decomposition is based on the law of total probability and is not unique. The D-vine decomposition amounts to a special ordering of the variables in the decomposition. The decomposition in terms of the $p$-variate copula density is given as \citep{aas2009pair},
\begin{align}
c[u_1, \hdots, u_p] =& \prod_{j = 1}^{p - 1} \prod_{i = 1}^{n -j} c_{i, i+j | i + 1,\hdots,i+j-1}\Big(F(u_i|u_{i+1},\hdots,u_{i+j-1}), \nonumber
\\
& F(u_{i+j}|u_{i+1},\hdots,u_{i+j-1}) \Big),
\label{D-vine}
\end{align}
where $c_{i,i+j| i + 1,\hdots,i+j-1}$ is the conditional copula density of $U_i$ and $U_{i+j}$ given the intermediate variables $U_{i+1},\hdots,U_{i+j-1}$. $F(u_i|u_{i+1},\hdots,u_{i+j-1})$ is the conditional distribution function of $U_i$ given $U_{i+1},\hdots,U_{i+j-1}$ evaluated at $u_i$ and conditioning variables $u_{i+1},\hdots,u_{i+j-1}$. Since the joint distribution of $U_i,U_{i+1},\hdots,U_{i+j-1}$ is a copula, $F(u_i|u_{i+1},\hdots,u_{i+j-1})$ may also be understood as a conditional distribution function, corresponding to a copula. The decomposition in Eq.~\ref{D-vine} is called 'D-vine' for 'drawable vine' because the variables are ordered in a path like manner \citep{czado2022vine}. The vine copula approach allows for more flexibility when modeling multivariate dependencies because each bivariate dependence can be specified separately.
\subsection{Copula-based Time Series with Autoregressive Order of One}
\label{Subsection:Methods:Copula_Based_Time_Series}
\noindent \SPP{Consider a strictly stationary time series $\{X_t\}_{t \in\mathbb{Z}}$. A time series being Markov$(p)$ means that $\mathbb{P}(X_t \leq a| X_{t-1} = x_{t-1}, \hdots, X_1 = x_1) = \mathbb{P}(X_t \leq a | X_{t-1} = x_{t-1}, \hdots, X_{t-p} = x_{t-p}), \ t > p$ \citep{ibe2013markov}. A stationary and univariate Markov$(p)$ time series is characterized by the $(p+1)$-variate joint distribution $F_{X_{t},\hdots, X_{t-p}}$ of consecutive random variables $X_{t}$ to $X_{t-p}$.} The idea is to decompose the joint distribution into a copula and marginal distributions. By Sklar's Theorem there always exists a copula $C := F_{F_X(X_{t}), \hdots, F_X(X_{t-p})}$ such that the joint distribution can be decomposed into marginal distributions $F_{X_t}$ to $F_{X_{t-p}}$ and the copula \citep{sklar1959fonctions, joe2014dependence}. If the random variables $\{X_{t}\}$ are continuous, the decomposition is unique. Due to strict stationarity we can write $F_{X_{t}} = \hdots = F_{X_{t-p}} =: F_X$. Then the decomposition may be written as
\begin{align*}
F_{X_t, \hdots, X_{t-p}}(x_0,\hdots,x_p) = C[F_X(x_0),\hdots, F_X(x_{p})].
\end{align*}
Hence, the joint distribution is specified by a marginal distribution $F_X$ and the copula $C$. Note that the copula is the joint distribution of the probability-integral transformed random variables $U_t := F_X(X_t)$ \SPP{to} $U_{t-p} := F_X(X_{t-p})$. The copula may be parametrized by dependence parameter(s) $\phi$, so \SPP{we write $C = C_{\phi}$}. We introduce this notation to emphasize the analogy to the linear AR model and later on, in Sect.~\ref{Section:MAGMAR} to the linear ARMA model. Then, analogous to the linear AR model, \SPP{the Markov($p$) model can be formulated in \SPP{terms of a updating equation} as}
\begin{align*}
X_t = F^{-1}_X\left(C_{\phi}^{-1}[w_t|F_X(X_{t-1}), \hdots, F_X(X_{t-p})]\right), \ \  \ w_t \stackrel{iid}{\sim} U(0,1).
\end{align*}
\SPP{where $C_{\phi}^{-1}$ is the conditional quantile function corresponding to the copula $C_{\phi}$. Formally $C_{\phi}^{-1}[x|u_1,\hdots,u_p]$ is the solution to $C_{\phi}[x|u_1,\hdots,u_p] = x$.}
We note that this or similar formulations are also used in other works under different names such as 'updating equation' in \cite{mcneil2022time} or in descriptions on how to sample from copulas in \cite{aas2009pair}.
This model can be regarded as a generalization of the Gaussian AR$(p)$-model. This can be seen by choosing $C_{\phi}$ as the normal copula and the marginal distribution $F_X$ as the standard normal distribution. Although this is a well-known fact in the literature, see e.g. \cite{chen2006estimation} or \cite{smith2010modeling}, we give the derivation for $p=1$ in \ref{App:Subsection:Proof:Related models} for completeness. 
In the following only the pseudo-observation time series, $\{U_t\}_{t = 1}^T, \ U_t := F_X(X_t)$, is considered. The actual time series can be recovered by quantile transforming every pseudo-observation by $F_X^{-1}$. The dynamics of the time series are solely determined by the relation between the pseudo-observations. \SPP{The model equation of the pseudo-observations is given as,}
\begin{align}
U_t = C^{-1}_{\phi}[w_t | U_{t-1}, \hdots, U_{t-p} ], \ \ w_t \stackrel{iid}{\sim} U(0,1).
\label{Eq:AR(p)}
\end{align}
%
%
In order to account for possibly different forms of dependency and to be flexible in modeling, D-vine copulas \citep{bedford2001probability, bedford2002vines, aas2009pair, czado2010pair, czado2022vine} can be employed for modeling the temporal dependence as is also done in \cite{nagler2022stationary} and \cite{smith2010modeling}. In this approach the joint density of $(p+1)$ consecutive pseudo-observations is decomposed into a product of bivariate copulas. The decomposition is given by \citep{smith2010modeling, mcneil2022time, nagler2022stationary}
\begin{align}
c_{t, \hdots, t-p}(u_t, \hdots, u_{t-p}) =& \prod_{k = t - p}^{t - 1} \prod_{j = k + 1}^{t} c_{j-k,j|S_{j-k,j}}\Big[F_{j-k|S_{j-k,j}}(u_{j-k}),  \nonumber
\\
& F_{j|S_{j-k|j}}(u_j) \Big],
\label{d-vine_density}
\end{align}
where the shorthand notation $c_{t, \hdots, t-p} = c_{U_t, \hdots, U_{t-p}}$ is used. Furthermore, $S_{j-k, j} = \{j - k + 1, \hdots, j - 1 \}$. As in the cited literature, only the simplified D-vine is considered, dropping the conditioning in the bivariate copula densities, $c_{j-k,j|S_{j-k,j}} = c_{j-k, j}$.
In the case where $c_{j-k,j} = c_{i - k, i} := c_{k} \ \forall j,i \in \{k+1, \hdots, T\}$, the joint density of all random variables \SPP{$\{U_t\}_{t \in\mathbb{Z}}$} admits translational invariance. \cite{nagler2022stationary} show that such processes are stationary. \SPP{Hence we refer to such D-vine copulas as \textit{stationary} D-vine copulas.} They also may be called s-vine ('s' for stationary) \citep{mcneil2022time}. Furthermore, the bivariate copulas, $C_{j-k, j}$ may be parametrized by parameters $\phi_1, \hdots, \phi_p$, where $\phi_1$ governs the dependence of $U_{t-1}$ and $U_t$ $\forall t \in \mathbb{Z}$, $\phi_2$ the dependence of $U_{t}|U_{t-1}$ and $U_{t-2}|U_{t-1}$ and so on. Note that depending on the chosen copula, each $\phi_i, i \in \{1,\hdots,p\}$ may also be a vector containing more than one parameter. \\
It is common to denote conditional distributions of pseudo observations by the $h$-function \citep{aas2009pair, czado2010pair, czado2022vine}. The $h$-functions associated to a copula $C$ are defined as
\begin{align}
h^1(u_1, u_2) = \frac{\partial C[u_1, u_2]}{\partial u_1}, \nonumber
\\
h^2(u_1, u_2) = \frac{\partial C[u_1, u_2]}{\partial u_2}.
\label{hfunctions}
\end{align}
If no superscript is given, then conditioning on the second variable is implied. \SPP{Using this notation,} the conditional distribution \SPP{of e.g.} $F_{U_t|U_{t-1}}$ may be rewritten as $F_{U_t| U_{t-1}}(u_t, u_{t-1}) = C_{\phi}(u_t|u_{t-1}) = h^2_{\phi_1}(u_t, u_{t-1})$.
Explicit forms of $h$-functions and their inverses can be found in \cite{smith2010modeling} for the normal, $t$, Clayton, Gumbel and Galambos copula.) An advantage of using the $h$-functions is that all conditional distributions can be decomposed into a \SPP{composition} of $h$-functions, i.e. Rosenblatt (forward/backward) functions \citep{mcneil2022time}. 
The decomposition of $U_t|U_{t-1},\hdots,U_{t-p}$ can be called Rosenblatt backward and the decomposition of $U_{t-p}|U_{t-p+1},\hdots,U_{t}$ Rosenblatt forward function (inspired by the Rosenblatt transform from \cite{rosenblatt1952remarks}). For $p = 1$ the Rosenblatt functions are just the corresponding $h$-functions, $R_1^1 = h_1^1$, $R_1^2 = h_1^2$. \cite{mcneil2022time} define the Rosenblatt forward and backward functions for $p > 1$, $R_p:(0,1)^p\times(0,1) \rightarrow (0,1)$, recursively as follows:
\begin{align}
R_p^1(\mathbf{u}, x) =& h_p^1 \left( R_{p-1}^2(\mathbf{u}_{-1}, u_1), R_{p-1}^1(\mathbf{u}_{-1}, x) \right), \nonumber
\\
R_p^2(\mathbf{u},x) =& h_p^2 \left(R_{p-1}^2(\mathbf{u}_{-p}, x), R_{p-1}^1(\mathbf{u}_{-p}, u_p) \right),
\label{Rosenblatt}
\end{align}
where $\mathbf{u}_{-j} = (u_1,\hdots,u_{j-1},u_{j+1},\hdots,u_k)$. The second argument of the Rosenblatt functions, $x$, is the variable that the Rosenblatt functions are evaluated at. The conditioning variables are contained in the vector $\mathbf{u}$. Using the newly introduced notation, the D-vine copula density (Eq. \ref{d-vine_density}) can be rewritten as \citep{mcneil2022time},
\begin{align*}
c_{t, \hdots, t-p}(u_t, \hdots, u_{t-p}) = \prod_{k = t - p}^{t - 1} \prod_{j = k + 1}^{t} c_{j-k,j}\Big[& R^1_{k - 1}((u_{j - k + 1},\hdots, u_{j - 1}),u_{j}),
\\
& R^2_{k - 1}((u_{j - k + 1},\hdots, u_{j - 1}), u_{j - k}) \Big].
\end{align*}
In order to rewrite the time series updating equation (Eq. \ref{Eq:AR(p)}) in the D-vine framework, the inverse Rosenblatt transforms can be utilized:
\begin{align}
U_t = R^{-1}_p((U_{t-p},\hdots,U_{t-1}), w_t), \quad w_t \stackrel{iid}{\sim} U(0,1).
\label{Eq:UpdatingEqRosenblatt}
\end{align}
\SPP{Here $R_p^{-1}$ is the inverse Rosenblatt function corresponding to the $(p+1)$-dimensional copula $C_{\phi}$ from Eq.~\ref{Eq:AR(p)}. The inversion is w.r.t. the last argument. Formally, $R_p^{-1}(\mathbf{u},x)$ is the solution to $R_p(\mathbf{u},x) = x$.} In this notation, the Markov($p$) restriction becomes obvious.
Next, we are going to discuss the extension of \cite{mcneil2022time} and \cite{joe2014dependence} to allow for infinite partial dependence.
\subsection{Review of the extension by McNeil \& Bladt}
\label{Subsection:Methods:Bladt_McNeil}
\noindent \cite{mcneil2022time} allow for infinite partial dependence in copula-based time series models by letting the autoregressive order, $p$, be infinite and reparametrizing the joint density with a finite amount of parameters.  The parameters are chosen in such a way that they represent a classical ARMA\SPP{($r,s$)} process with finite $r$ and $s$. To model a time series, the following steps are employed. First, the autoregressive and the moving average order, $r$ and $s$, respectively, are determined by minimizing model selection criteria (e.g. AIC). Next, the partial autocorrelations of the chosen ARMA($r,s$) process are parametrized in terms of the autoregressive and the moving average parameters, $\phi_1, \hdots, \phi_r, \theta_1, \hdots, \theta_s$. Using the relation between autocorrelations and Kendall's $\tau$, the Kendall's partial autocorrelations are obtained. \SPP{In a third step, all partial copulas in Eq. \ref{d-vine_density} are set to a chosen family (e.g. Gaussian or Gumbel) and are reparametrized by Kendall's $\tau$ and consecutively by the AR and MA parameters, which can then be estimated by maximum likelihood. \cite{bladt2025semiparametric} generalize this approach and consider a Gaussian D-vine, i.e. all copulas in Eq.~\ref{d-vine_density} are Gaussian, and then replace a finite amount of selected copulas by non-Gaussian copulas. Henceforth their approach will be called 'IPD' for infinite partial dependence. The conceptual difference to our MAGMAR-copula model and to the model by \cite{joe2014dependence} is that their approach is based on classical ARMA models. Furthermore, the partial copulas are modeled and not the AR- and MAG-copulas. We discuss the conceptual differences in more detail below.}
%
\subsection{Review of the extension by Joe}
\label{Subsection:Methods:Joe}
\noindent
Now we want to briefly introduce the model of \cite{joe2014dependence}, which is to the best of our knowledge the earliest copula extension of the ARMA model. The model \SPP{for a time series $\{U_t\}_{t\in\mathbb{Z}}$} with uniform marginals is given by the following updating equations:
\begin{align}
W_s =& h(\varepsilon_s; W_{s-1},\hdots,W_{s-p}), \ \ s \in \mathbb{Z},
\label{Joe_AR}
\\
U_t =& g(\varepsilon_t; W_{t-q}, \varepsilon_{t-1},\hdots,\varepsilon_{t-q+1}), \ \ t \in \mathbb{Z},
\label{Joe_MA}
\end{align}
where $\varepsilon_t \stackrel{iid}{\sim} U(0,1)$ are the common innovations, $h$ is the conditional quantile function of a $(p+1)$-variate copula and $g$ is the conditional quantile function of a $(q+1)$-variate copula. It is required that the copula corresponding to $h$ is translationally invariant and that the $q$-variate marginal distribution of the copula corresponding to $g$ is the independence copula. The idea is inspired by the Poisson ARMA process of \cite{mckenzie1988some}: First create a process corresponding to an AR process, Eq.~\ref{Joe_AR}, and then feed this process to an MA-like filter together with the common innovations, \SPP{as in} Eq.~\ref{Joe_MA}. In our proposed MAGMAR-copula model, we too attempt to mix AR-type and MA-type copula processes. The difference, however, is that we feed the MA-type process (what we call 'MAG') into the AR-type filter, \SPP{similar to classical ARMA-type models}.
%
\subsection{Comparison of the Models with the Proposed Model}
\label{Subsection:Methods:Comparison}
\noindent
\SPP{We briefly point out the main differences between the model by \cite{mcneil2022time}, the model by \cite{joe2014dependence} and our proposed model. In \ref{App:Subsection:Comparison} we compare the models in detail w.r.t. flexibility, computational demand and modeling approach. For brevity we give the conclusions of the comparison here.\\
The MAGMAR and common-innovation model are similar in spirit. However, the MAGMAR-model has a more straight-forward modeling approach. In return, the common-innovations approach can be expected to be more tractable due to the stationary distribution naturally being standard-uniform. The additional adjustment transformation needed for the MAGMAR-model hinders theoretical tractability. The computational demand can be expected to be similar. The flexibility of the common-innovations approach is a potential direction for future research, since it is not obvious due to the AR-part being latent.\\
The modeling approach of the MAGMAR and IPD model are similar as both models can rely on e.g. autodependence and scatter plots. The main difference is that the partial copulas of $(U_t, U_{t-k}|\mathcal{F}_{t-1:t-k+1})$ result from the AR and MAG-copula in the MAGMAR model, and are typically non-standard, whereas in the IPD model, the partial copulas can be chosen directly. This results in the MAGMAR model being preferable if the long-term dependencies are non-Gaussian. If the long-term dependencies are Gaussian and only a few partial dependencies seem to be non-linear, the IPD model is more suitable. While both models have moderate computational demand, the IPD models demand is less.}
%

%
\section{MAGMAR-Copulas}
\label{Section:MAGMAR}
\noindent In this section, we introduce and discuss the moving aggregate modification to copula-based time series models. For illustration, the autoregressive and moving aggregate \SPP{will be} set to one. Results for arbitrary orders $p$ and $q$ can be derived in a similar fashion \SP{but are beyond the scope of this paper;} however, some results with arbitrary orders $p$ and $q$ are given in the Appendix.
\subsection{Introduction to the model}
\label{Subsection:MAGMAR:Introduction}
\noindent Our proposed approach enabling the model to overcome the Markov restriction amounts to modifying the autoregressive updating equation, Eq.~\ref{Eq:UpdatingEqRosenblatt}, by a moving aggregate part. The moving aggregate part is, similar to classical, linear ARMA processes, a function of the current and past innovations. It is given as a conditional quantile function, corresponding to a copula, which we parametrize with parameter(s) $\theta$. \SPP{So instead of $w_t$ in the updating equation, Eq.~\ref{Eq:UpdatingEqRosenblatt}, we have $(R^1_{\theta_1, \hdots, \theta_{q}})^{-1}((w_{t-q}, \hdots, w_{t-1}), w_t)$. Where the inversion is again w.r.t. the last element.} We call the model Moving Aggregate Modified Autoregressive Copula-Based Time Series Model (MAGMAR-Copula model). \SPP{Below we formally define the model with arbitrary model orders $p$ and $q$. The exposition that follows focuses on the case $p=q=1$.} Furthermore, we assume that 1) all copulas are differentiable, 2) that they are within the Fr\'echet-Hoeffding bounds \citep[see][Chapter 2.9]{joe2014dependence} and 3) that the inverses of $h$-functions associated to all used copulas exist. \SP{Now we formally introduce the model.}
\begin{definition}[MAGMAR$(p,q)$-Copula Model]
\label{Def:MAGMARpq}
Let $R_{\phi_1,\hdots,\phi_p} : (0,1)^{p+1} \rightarrow (0,1)$ be a Rosenblatt function associated to a \SPP{stationary} D-vine copula $C_{\phi_1,\hdots,\phi_p}$, which is parametrized by parameters $\phi_1,\hdots,\phi_p$. Furthermore, let $R^{-1}_{\theta_1,\hdots,\theta_q} : (0,1)^{q+1} \rightarrow (0,1)$ be an inverse Rosenblatt function associated to a D-vine copula $C_{\theta_1,\hdots,\theta_q}$ \SPP{with independent margins,} which is parametrized by parameters $\theta_1,\hdots,\theta_q$. The time series $\{U_t\}_{t \in \mathbb{Z}}$ following the updating equation
\begin{align}
R^1_{\phi_1, \hdots, \phi_p}((U_{t-p}, \hdots, U_{t-1}), U_t) = (R^1_{\theta_1, \hdots, \theta_{q}})^{-1}((w_{t-q}, \hdots, w_{t-1}), w_t)
\label{Eq:MAGMAR_p_q}
\end{align}
is a MAGMAR$(p,q)$-time series. The corresponding model is the MAGMAR$(p,q)$-Copula model. The copula $C_{\phi_1,\hdots,\phi_p}$ is the AR-Copula. The copula $C_{\theta_1,\hdots,\theta_q}$ is the MAG-Copula. The LHS of Eq.~\ref{Eq:MAGMAR_p_q} is the AR-part and the RHS is the MAG-part. \SPP{The time series following the updating equation
\begin{align}
U_t = (R^1_{\theta_1, \hdots, \theta_{q}})^{-1}((w_{t-q}, \hdots, w_{t-1}), w_t)
\end{align}
is the \textit{MAG$(q)$-copula model}.} Furthermore, we define the \textit{adjusted MAGMAR$(p,q)$-copula time series} as $\tilde{U}_t = \Psi(U_t), \ \forall t\in\mathbb{Z}$, where $\Psi$ is a distribution function on $(0,1)$.
\end{definition}
\noindent
The Rosenblatt functions in Eq.~\ref{Eq:MAGMAR_p_q} ought to be understood as a composition of $h$-functions (as given in Eq.~\ref{Rosenblatt}). The AR-Rosenblatt function, $R^1_{\phi_1, \hdots, \phi_p}$ is composed of $h$-functions $h_{\phi_1},\hdots,h_{\phi_p}$ while the MAG-Rosenblatt function is composed of $h$-functions $h_{\theta_1},\hdots,h_{\theta_q}$. \SPP{The AR-copula is a stationary D-vine copula, cf. Sect.~\ref{Subsection:Methods:Copula_Based_Time_Series} or \cite{mcneil2022time}. The MAG-copula has independent margins. This property is also imposed by \cite{joe2014dependence}.}
The family of the copulas (e.g. normal, $t$, Gumbel, etc.) has to be chosen. Parameters $\phi$ and $\theta$ have to be set/estimated. We note that the MAG-part, that is used here, was also derived in \cite{mcneil2022time} as the 'causal representation' for the classical copula-based time series models. In \cite{joe2014dependence}, it is also given as the '\SPP{$q$}-dependent copula extension'. The new contribution is that we use this representation to construct the novel MAGMAR-Copula model. The reason for the adjustment, using a distribution function $\Psi$, is the following: The unconditional stationary distribution of a MAGMAR\SPP{$(p,q)$-copula} time series, $\{U_t\}$ is in general not uniform if $p,q>0$. The distribution that arises is a distribution on $(0,1)$, and it depends on the AR and MAG parameters $(\phi,\theta)$. The probability integral transformation using $\Psi$ then ensures that $\tilde{U}_t\sim U(0,1)$. Notice that we did not prove that the stationary distribution of the MAGMAR\SPP{-copula} time series actually exists yet. \SPP{The corresponding proposition} is given in Sect.~\ref{Subsection:MAGMAR:Distributional_Properties} for the MAGMAR$(1,1)$-copula model. This is the reason for introducing $\Psi$ as a distribution function on $(0,1)$ and not as the stationary distribution of the process in Def.~\ref{Def:MAGMARpq}. In applications, we estimate $\Psi$ with non-parametric methods in a complementary step. In the following, we mostly consider the unadjusted time series $\{U_t\}$, since the transformation does not affect most results. When appropriate, we also give the results for the adjusted time series. \\
Before giving the relations to other models and deriving distributional properties, we proceed to discuss implications of the model equation and further explain the model for the case where $p=q=1$. \SPP{In this case the updating equation is given as
\begin{align}
U_t = h_{\phi}^{-1}(h_{\theta}^{-1}(w_t, w_{t-1}), U_{t-1}).
\label{Eq:MAGMAR(1,1)}
\end{align}}
First, we note that $C_{\theta}$ is \textbf{not} the copula of $w_t$ and $w_{t-1}$. The innovations $w_t$ and $w_{t-1}$ are independent and their copula is the independence copula, $F_{w_t,w_{t-1}} = \Pi$. The expression $h_{\theta}(w_t, w_{t-1})$ is to be understood as a function of $(q+1)$ random variables $w_t$ and $w_{t-1}$, just as $\varepsilon_t - \theta \varepsilon_{t-1}$ in the classical ARMA model is a function of two random variables $\varepsilon_t$ and $\varepsilon_{t-1}$. Second, the copula $C_{\phi}$ is also \textbf{not} the copula of $U_t$ and $U_{t-1}$, however it is connected to the copula of the two random variables. This is analogous to the classical ARMA(1,1) model, where the partial autocorrelation of $X_{t-1}$ and $X_t$ is not solely determined by the AR(1) parameter but also by the MA(1) parameter. (The influence of the MAG-Copula and the AR-Copula to the dependence of $U_t$ and $U_{t-1}$ becomes clear when we derive the joint conditional distribution of $U_t$ and $U_{t-1}$ given $\mathcal{F}_{t-2}$.) The expression $h_{\phi}(h_{\theta}^{-1}(w_t, w_{t-1}), U_{t-1})$ may also be understood as a function of random variables. 
To verify that the MAGMAR$(1,1)$ model equation actually yields the expected behavior to have no Markov restriction i.e. \SPP{an implicit} infinite autoregressive order, the model equation can be iterated. To this aim, an explicit form for the innovations is necessary. It can be directly obtained from Eq. \ref{Eq:MAGMAR(1,1)} as
\begin{align*}
w_t = h_{\theta}(h_{\phi}(U_t, U_{t-1}), w_{t-1}).
\end{align*}
Iteratively substituting for $w_{t-1}$ in Eq. \ref{Eq:MAGMAR(1,1)} yields the AR($\infty$) representation. Utilizing the composition symbol, which acts as $\bigcirc_{i = 1}^{N} f_i = f_1 \circ f_2 \circ \hdots \circ f_N$, the representation can be written as
\begin{align}
w_t =& h_{\theta}(h_{\phi}(U_t, U_{t-1}), \cdot ) \circ h_{\theta}(h_{\phi}(U_{t - 1}, U_{t - 2}), \cdot ) \circ \hdots, \nonumber
\\ 
=& \bigcirc_{i = 0}^{\infty} h_{\theta}(h_{\phi}(U_{t - i}, U_{t - i - 1}), \cdot ).
\label{AR(Infinity)}
\end{align}
Hence, in general, each past observation, $U_{t-i}, i = 1,2,\hdots$ has an actual influence on $U_t$. This is the desired property. Furthermore, the MAG($\infty$) representation can be derived as
\begin{align}
U_t =& h_{\phi}^{-1}(h_{\theta}^{-1}(w_t, w_{t-1}), \cdot) \circ h_{\phi}^{-1}(h_{\theta}^{-1}(w_{t - 1}, w_{t - 2}), \cdot) \circ \hdots, \nonumber
\\
=& \bigcirc_{i = 0}^{\infty} h_{\phi}^{-1}(h_{\theta}^{-1}(w_{t - i}, w_{t - i - 1}), \cdot). 
\label{MAG(Infinity)}
\end{align}
As noted before, there also exists a MAG($\infty$) representation for the case with no moving aggregate extension. The representation was derived in \cite{mcneil2022time} as 'causal representation'. It is easy to see that for $C_{\theta} = \Pi$, the two representations are equal.
This is in line with proposition~\ref{Prop:Relation_to_other_models}, which proposes that for $C_{\theta} = \Pi$, the MAGMAR(1,1)-Copula model is equal to the classical copula-based time series model. The \SPP{corresponding} proposition is given in Sect.~\ref{Subsection:MAGMAR:RelatedModels}. In Sect.~\ref{Subsection:MAGMAR:Distributional_Properties}, it is explicitly shown that the MAG($\infty$) representation of the MAGMAR-Copula time series converges for the case where the AR-copula, $C_{\phi}$, is the normal copula. For other copulas, it is hard to establish the convergence of the MAG$(\infty)$ representation explicitly. \SPP{However, we can use results from \cite{douc2014nonlinear}, involving a contraction-on-average, to establish desired properties.}
\subsection{Relation to other models}
\label{Subsection:MAGMAR:RelatedModels}
\noindent Now we derive the relation of the proposed MAGMAR-Copula model to the classical copula-based time series model and the classical linear ARMA model. The results are given for the MAGMAR$(1,1)$ case but can be naturally extended to the MAGMAR$(p,q)$ case. Furthermore we define the MAG-copula time series and show that it is stationary.
\SPP{
\begin{prop}
\label{Prop:Relation_to_other_models}
Let $\{U_t\}_{t \in \mathbb{Z}}$ be a MAGMAR$(1,1)$-Copula time series with AR-Copula $C_{\phi}$ and the MAG-Copula, $C_{\theta}$. The following holds.
\begin{itemize}
\item[i)] If $C_{\phi}$ and $C_{\theta}$ are chosen as the normal copulas, the quantile transformed time series $\{\Phi^{-1}(U_t) \}_{t \in \mathbb{Z}}$ is the classical Gaussian and linear ARMA$(1,1)$ model with AR-parameter $\phi$, MA-parameter $\frac{\theta}{(1-\theta^2)^{\frac{1}{2}}}$ and independent normally distributed innovations with variance $(1 - \phi^2)(1 - \theta^2)$.
\item[ii)] If the MAG-copula is the independence copula, $C_{\theta} = \Pi$, then the classical copula-based time series model (see Sect. \ref{Subsection:Methods:Copula_Based_Time_Series}) is recovered.
\item[iii)] If the AR-copula is the independence copula, then the MAG$(1)$-Copula time series is recovered.
\item[iv)] If both the AR and MAG-copula are chosen as the independence copula, the time series $\{U_t\}_{t\in\mathbb{Z}}$ is an $iid$ standard uniform time series.
\end{itemize}
\end{prop}
}
\begin{proof}
See Appendix \ref{App:Subsection:Proof:Related models}.
\end{proof}
\noindent
The first \SPP{two statements} show that the classical linear ARMA model as well as the classical copula-based time series model is nested by the MAGMAR-Copula model. Part ii) also justifies describing the copula-based time series model as MAGMAR$(1,0)$-Copula model in further discussions.
Next we define the MAG-Copula model.
\begin{definition}[MAG(1)-Copula model]
Let $\{U_t\}_{t \in \mathbb{Z}}$ be a MAGMAR$(1,1)$-Copula time series. If the AR-copula is set to be the independence copula, $C_{\phi} = \Pi$, the resulting model is the MAG(1)-Copula model.
\end{definition}
\noindent We note again that the MAG$(1)$-Copula model coincides with the $1$-dependent model from Chapter 3.14 of \cite{joe2014dependence}. It is easy to see that that a MAG(1)-Copula time series, $\{V_t\}_{t\in\mathbb{Z}}$ follows the updating equation
\begin{align*}
V_t = h_{\theta}^{-1}(w_t, w_{t-1}), \ w_t \stackrel{iid}{\sim} U(0,1).
\end{align*} 
Now we establish the stationarity of the MAG(1)-Copula model and show that the unconditional marginal distribution is given as standard uniform.
\begin{prop}
\label{Prop:MAG(1)_Distributional}
Let $\{U_t\}_{t \in \mathbb{Z}}$ be a MAG$(1)$ time series. Then $\{U_t\}_{t \in \mathbb{Z}}$ is strictly stationary, ergodic and $U_t \sim U(0,1), \ \forall t \in \mathbb{Z}$.
\end{prop}
\begin{proof}
\SPP{See~\ref{App:Subsection:Proof:Distributional_properties_MAG(1)}}
\end{proof}
\noindent The models distributional properties are important for further work. They play an important role in establishing the stationarity of the MAGMAR$(1,1)$-Copula model with the normal AR-Copula, see Sect.~\ref{Subsubsection:MAGMAR:Distributional:Stationarity}.
%
\subsection{Distributional Properties of the model}
\label{Subsection:MAGMAR:Distributional_Properties}
\noindent In this section, the distributional properties of random variables $\{U_i\}_{i \in \mathbb{Z}}$, following the MAGMAR-Copula model, are derived. In order to keep the findings comprehensible, the ideas are developed for the MAGMAR$(1,1)$ case. They can be generalized to the MAGMAR($p, q$) case. First, we show that the process is stationary for the case where the AR-copula is the normal copula and the parameter fulfills $|\phi| < 1$. Second, we derive the conditional distribution of $U_t$ given past observations. Third, using the law of total probability and conditional independence we derive the joint distribution of $\{U_i\}_{i \in \mathbb{Z}}$.
%
%
\subsubsection{Stationarity}
\label{Subsubsection:MAGMAR:Distributional:Stationarity}
\noindent We first show that the process is stationary for the case where the AR-copula is the normal copula. \SP{Then we generalize the result for the case, where the AR-copula fulfills a contraction-on-average property.} In order to establish stationarity for the normal AR-copula case, we consider the MAG representation of the MAGMAR$(1,0)$-normal-Copula time series and show that it does converge almost surely for all $t$. Together with the stationarity of the MAG$(1)$-Copula model, the stationarity of the MAGMAR$(1,1)$-Copula model will follow. \SPP{The result is a special case of the more general result on MAGMAR$(1,0)$ models, where the copula on average induces a contraction in proposition~\ref{Prop:AR-Copula_Contraction}. We still propose the result for the normal AR-copula for illustrational purpose}
\begin{prop}
\label{Prop:Convergence_normal_MAGMAR(1,0)}
Let $\{U_t\}$ be a MAGMAR$(1,0)$-Copula time series where the AR-Copula is the normal copula,  $C_{\phi} = C_{\phi}^{\text{normal}}$, with parameter $|\phi| < 1$. Then the MAG$(\infty)$ representation,
\begin{align*}
A_{t,n} := \bigcirc_{i = 0}^n h_{\phi}^{-1}(w_{t-i}, \circ)
\end{align*}
does converge almost surely as $n \rightarrow \infty$ and $\lim_{n \rightarrow \infty} A_{t,n} = U_t, \ U_t \sim U(0,1)$ almost surely.
\end{prop}
\begin{proof}
See~\ref{App:Subsection:Proof:Convergence_MAGMAR(1,0)}
\end{proof}
\noindent Notice that the \SPP{almost sure convergence also} holds when instead of an $iid$ sequence $w_t$, a dependent, but stationary, sequence is plugged in. This is in line with Theorem 36.4 from \cite{billingsley1995probability}. Furthermore, from Lemma~\ref{Prop:Convergence_normal_MAGMAR(1,0)} it follows that the MAGMAR$(1,0)$-Copula time series is strictly stationary. We want to note, that the MAG($\infty$)-representation of the MAGMAR$(1,0)$-Copula time series is analogous to the MA($\infty$) representation in the classical linear ARMA model \citep[see][]{hamilton2020time} and is equivalent to the representation derived in \cite{mcneil2022time}. In principle the MA($\infty$) representation of the classical linear ARMA model can also be written as a infinite composition. However, there the composed functions are linear and, hence, also their composition is linear. 
\noindent We noted earlier that the $a.s.$ convergence in proposition~\ref{Prop:Convergence_normal_MAGMAR(1,0)} also holds if the innovation time series $\{w_t\}_{t \in \mathbb{Z}}$ is not $iid$ but merely stationary. In the next proposition we exploit this fact to show that the MAGMAR$(1,1)$-Copula time series with the AR-Copula being the normal copula is stationay.
\begin{prop}
\label{Prop:Stationarity_MAGMAR(1,1)_normal_AR}
Let $\{U_t\}_{t \in \mathbb{Z}}$ be a MAGMAR$(1,1)$-Copula time series where the AR-Copula is the normal copula,  $C_{\phi} = C_{\phi}^{\text{normal}}$, with parameter $|\phi| < 1$. Then $\{U_t\}_{t \in \mathbb{Z}}$ is strictly stationary.
\end{prop}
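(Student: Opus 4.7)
The plan is to reduce this proposition to the previous one (the MAGMAR$(1,0)$ stationarity, Proposition~\ref{LemmaMAGMAR(1,0)}) by viewing the MAG-part of the updating equation as a stationary driving sequence. Concretely, define $V_t := h_\theta^{-1}(w_t,w_{t-1})$. By Proposition~\ref{LemmaMAG(1)}, $\{V_t\}_{t\in\mathbb{Z}}$ is strictly stationary with $V_t\sim U(0,1)$. Substituting into Eq.~\ref{MAGMAR(1,1)} gives
\begin{equation*}
U_t \;=\; h_\phi^{-1}(V_t,\,U_{t-1}),
\end{equation*}
which is structurally identical to the MAGMAR$(1,0)$ recursion, except that the innovations $\{V_t\}$ are stationary rather than $iid$.

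Next, I would iterate this recursion using the explicit normal-copula form $h_\phi^{-1}(u_1,u_2)=\varphi\bigl(\varphi^{-1}(u_1)(1-\phi^2)^{1/2}+\phi\varphi^{-1}(u_2)\bigr)$. After $n$ substitutions one obtains
\begin{equation*}
\varphi^{-1}(U_t) \;=\; (1-\phi^2)^{1/2}\sum_{i=0}^{n}\phi^i\,\varphi^{-1}(V_{t-i}) \;+\; \phi^{\,n+1}\varphi^{-1}(U_{t-n-1}),
\end{equation*}
so that the candidate limit is $U_t = \varphi\bigl((1-\phi^2)^{1/2}\sum_{i=0}^\infty \phi^i \varphi^{-1}(V_{t-i})\bigr)$. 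To justify $a.s.$ convergence I would not use the $iid$ argument from Proposition~\ref{LemmaMAGMAR(1,0)} verbatim; instead, because $\varphi^{-1}(V_{t-i})\sim N(0,1)$ (as $V_{t-i}$ is $U(0,1)$), one has $E|\varphi^{-1}(V_{t-i})|=\sqrt{2/\pi}$, and Tonelli gives $E\sum_{i\ge 0}|\phi|^i|\varphi^{-1}(V_{t-i})| = \sqrt{2/\pi}\,(1-|\phi|)^{-1}<\infty$. Hence the series converges absolutely almost surely, and the remainder term $\phi^{n+1}\varphi^{-1}(U_{t-n-1})$ vanishes in probability because $\varphi^{-1}(U_{t-n-1})$ is tight (indeed, standard normal once the limit is established). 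This is exactly the extension the remark after Proposition~\ref{LemmaMAGMAR(1,0)} refers to via Billingsley's Theorem 36.4.

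Once the limit exists, I would conclude strict stationarity structurally. The representation exhibits $U_t$ as a fixed measurable function $\Psi:[0,1]^{\mathbb{N}}\to[0,1]$ of the one-sided tail $(V_t,V_{t-1},V_{t-2},\ldots)$, namely $U_t=\Psi(V_t,V_{t-1},\ldots)$. For any finite index set $t_1<\ldots<t_m$ and any shift $k\in\mathbb{Z}$, the vector $(U_{t_1+k},\ldots,U_{t_m+k})$ is therefore the same measurable functional applied coordinate-wise to shifted tails of $\{V_s\}$. Strict stationarity of $\{V_s\}$ (Proposition~\ref{LemmaMAG(1)}) implies that these shifted tails have the same joint law as the unshifted ones, so $(U_{t_1+k},\ldots,U_{t_m+k})\stackrel{d}{=}(U_{t_1},\ldots,U_{t_m})$. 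This is strict stationarity of $\{U_t\}_{t\in\mathbb{Z}}$.

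The main obstacle is the convergence step, since Proposition~\ref{LemmaMAGMAR(1,0)} was stated for $iid$ innovations and its proof appeals to the variance bound for a geometric sum of independent $N(0,1)$ terms; here the terms $\varphi^{-1}(V_{t-i})$ are dependent (they share common $w$'s across neighbouring indices). The clean fix is the absolute-$L^1$ argument above, which ignores the dependence structure entirely and uses only that $\{V_t\}$ is identically distributed as $U(0,1)$ with geometrically summable weights $\phi^i$. Everything else — the identification of the limit with the process defined by the updating equation, and the passage from the representation to strict stationarity — is routine.
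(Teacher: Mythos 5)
Your proposal follows the same route as the paper: both define $V_t := h_{\theta}^{-1}(w_t,w_{t-1})$, invoke the MAG$(1)$ stationarity result to see that $\{V_t\}$ is strictly stationary and $U(0,1)$-distributed, write $U_t$ via the normal-copula MAG$(\infty)$ representation as $\varphi\bigl((1-\phi^2)^{1/2}\sum_{i\ge 0}\phi^i\varphi^{-1}(V_{t-i})\bigr)$, and conclude strict stationarity because this exhibits $U_t$ as a fixed measurable functional of the shifted tails of a stationary sequence (Billingsley, Theorem 36.4). Where you genuinely improve on the paper is the convergence step: the paper simply asserts that the almost-sure convergence proved for $iid$ innovations ``also holds'' for a dependent but stationary input sequence, leaning on a variance-boundedness argument that is only valid for independent summands (Kolmogorov's criterion); your Tonelli argument --- $E\lvert\varphi^{-1}(V_{t-i})\rvert=\sqrt{2/\pi}$ and geometric weights give $E\sum_i\lvert\phi\rvert^i\lvert\varphi^{-1}(V_{t-i})\rvert<\infty$, hence absolute convergence a.s. --- uses only the identical marginals and is the correct fix, at the mild cost of losing the variance identification of the limit (which is not needed here, since the stationary law of $U_t$ need not be uniform anyway, as the paper itself remarks). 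One small point: your remainder term $\phi^{n+1}\varphi^{-1}(U_{t-n-1})$ requires a tightness claim that is slightly circular as stated; this is avoided entirely if, like the paper, you simply \emph{define} the process as the a.s. limit of the partial compositions $A_{t,n}$ and verify that it satisfies the updating equation, rather than iterating a presupposed two-sided solution.
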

\begin{proof}
See~\ref{App:Subsection:Proofs:Stationarity_MAGMAR(1,1)_normal_AR}
\end{proof}
\noindent
Now we generalize the result from proposition~\ref{Prop:Stationarity_MAGMAR(1,1)_normal_AR} for general bivariate copulas that fulfill a contraction-on-average assumption.
\SP{
\begin{prop}[AR-Copula contraction]
\label{Prop:AR-Copula_Contraction}
Let $\{U_t\}_{t\in\mathbb{Z}}$ be a MAGMAR$(1,1)$-Copula time series. Denote the AR copula by $C_{\alpha}$ and the associated inverse $h$-function as $h_{\alpha}^{-1}(\,\cdot, \, \cdot \,)$. If there is a bijective mapping $\kappa: \mathbb{R} \rightarrow (0,1)$ such that $f_z:(0,1)\rightarrow\mathbb{R}$ with $f_z(x) = \kappa^{-1}[ h_{\alpha}^{-1}(z, \kappa(x))]$ fulfills assumptions A4.37 and A4.38 from \cite{douc2014nonlinear}, then $\{U_t\}$ is stationary and ergodic. 
\end{prop}
}
\begin{proof}
See~\ref{App:Subsection:Proofs:AR_Copula_Contraction}
\end{proof}
\noindent
The conditions have to be checked for each AR-copula individually. In particular, a suitable mapping $\kappa$ has to be found. In \ref{App:Subsection:ContractionCopulas} we check the conditions for the normal, $t$ and Clayton copula. \SPP{The generalization of the result for $q>1$ is straight-forward, since the resulting MAG$(q)$ process is still stationary and ergodic. The generalization for $p>1$ is not straight-forward. The typical approach would be to extend the updating equation to a $p$-dimensional updating equation and to show that the corresponding mapping induces a contraction. This approach is not straight-forward for copula-based models because the inverse Rosenblatt transformation is hard to tract and it is not clear how to establish a contraction. This question will be the subject of future research.} Note that the proposition now enables us to consider the adjustment transformation $\Psi$ from Def.~\ref{Def:MAGMARpq} as the unconditional stationary distribution function of the process, since we now proved such a distribution exists \SPP{under suitable conditions.} Now we proceed to derive the conditional distributions of the model.
\subsubsection{Conditional distribution}
\label{Subsubsection:MAGMAR:Distributional:Conditional_Distribution}
\noindent The conditional distribution of $U_t|\mathcal{F}_{t-1}$, where $\mathcal{F}_{s} := \sigma(\{X_{r}\}_{r \leq s})$, can be derived by using the density transformation theorem \citep[see e.g.][Sect. 4.7]{devore2012modern}. We can interpret the updating equation of the MAGMAR$(1,1)$-Copula model conditional on \SPP{$\mathbf{U}_{t-1} := (U_{t - i})_{i=1}^{\infty} = (U_t, U_{t-1},\hdots)$} as follows: The random variable $w_t \sim U(0,1)$ is transformed by a function $g_{\mathbf{U}_{t-1}}:(0,1) \rightarrow (0,1)$ to the random variable $U_t$, \SPP{i.e. $U_t|\mathcal{F}_{t-1} = g_{\mathbf{U}_{t-1}}(w_t)|\mathcal{F}_{t-1}$. The other random variables, $\mathbf{U}_{t-1} = (U_{t - i})_{i=1}^{\infty}$}, are regarded as deterministic due to conditioning. We give an explicit form for $g_{\mathbf{U}_{t-1}}$ in the proof of the following proposition. 
\begin{prop}
\label{Prop:Conditional_Density_MAGMAR(1,1)}
Let $\{U_t\}_{t \in \mathbb{Z}}$ be a MAGMAR$(1,1)$-Copula time series. Furthermore, let the AR-Copula, $C_{\phi}$, and the MAG-Copula, $C_{\theta}$, be differentiable. Then the conditional density of $U_t|\mathcal{F}_{t-1}$ is given as
\begin{align}
f_{U_t|\mathcal{F}_{t-1}}(x) = c_{\theta} \left( h_{\phi}(x, U_{t-1}), w_{t-1}(\{U_{t - i}\}_{i=1}^{\infty}) \right) c_{\phi}(x, U_{t-1}).
\label{MAGMAR11ConditionalDensity}
\end{align}
where $c_{\theta}$ is the copula density associated to the MAG-Copula and $c_{\phi}$ is the density associated to the AR-Copula. The expression $w_{t-1}((U_{t - i})_{i=1}^{\infty})$ is to be understood as a function of conditioning variables $(U_{t - i})_{i=1}^{\infty}$, which are regarded as constant in the conditional density. For the adjusted time series, $\{\tilde{U}_t\}_{t\in\mathbb{Z}} = \{\Psi(U_t)\}_{t\in\mathbb{Z}}$, the conditional density is given as
\begin{align*}
f_{\tilde{U}_t|\mathcal{F}_{t-1}}(x) =& c_{\theta} \left( h_{\phi}(\Psi^{-1}(x), \Psi^{-1}(\tilde{U}_{t-1})), w_{t-1}(\{\tilde{U}_{t - i}\}_{i=1}^{\infty}) \right) c_{\phi}(\Psi^{-1}(x), \Psi^{-1}(\tilde{U}_{t-1}))
\\
&\times \frac{1}{\psi(\Psi^{-1}(x))},
\end{align*}
where $\psi$ is the density corresponding to the distribution function $\Psi$.
\end{prop}
\begin{proof}
See~\ref{App:Subsection:Proof:ConditionalDistributionMAGMAR11}
\end{proof}
\noindent It is evident from the conditional density, Eq.~\ref{MAGMAR11ConditionalDensity}, that $U_t | \mathcal{F}_{t-1} = U_t | (U_{t - i})_{i=1}^{\infty} = U_t | (U_{t-1}, w_{t-1})$. This means that conditioning on all past random variables $(U_{t - i})_{i=1}^{\infty}$ is equivalent to conditioning on $(U_{t-1}, w_{t-1})$. This is the same property as in the classical linear ARMA model.
Note that for the special case where $c_{\theta}$ is the independence copula density, $c_{\theta} = 1_{(0,1)^2}$, the conditional density of the copula-based time series model with autoregressive order of one is recovered. In the MAGMAR$(p, q)$ case, the conditional density can be derived in a similar fashion. The random variable $w_t$ is transformed to the random variable $U_t$ by an appropriate transformation function $g_{\mathbf{U}_{t-1}}$. The resulting density is a rather complicated expression which is given in \ref{App:Subsection:Conditional_Density_MAGMAR(p,q)}. Furthermore we note that multivariate conditional densities such as $f_{U_t,U_{t-1}|\mathcal{F}_{t-2}}$ can be derived in a similar way. The derivation of joint conditional densities for the MAGMAR$(1,1)$ model can be found in \ref{App:Subsection:JointConditionalDistribution}.
\subsubsection{Joint distribution and Likelihood}
\label{Subsubsection:MAGMAR:Distributional:Joint_distribution_Likelihood}
\noindent To derive the joint distribution of $t$ consecutive random variables from the time series $\{U_t\}_{t = 1}^T$ following the MAGMAR-Copula model, conditional independence and the law of total probability \citep[][B.3.5]{schervish2012theory} is utilized. The joint distribution of the $t$ random variables  $(U_t,\hdots,U_1)$ is given as
\begin{align}
f_{U_t, \hdots, U_1}(u_t, \hdots, u_1) 
= f_{U_1}(u_1) \prod_{i = 0}^{t-2} f_{U_{t-i}|\mathcal{F}_{t-i-1}}(u_{t-i}|\mathcal{F}_{t-i-1}).
\label{joint_density}
\end{align}
For the MAGMAR(1,1)-Copula model, the conditional distribution from Eq.~\ref{MAGMAR11ConditionalDensity} can \SPP{be} plugged in for $f_{U_{t-i}|\mathcal{F}_{t-i-1}}$. For the MAGMAR$(p,q)$-model the conditional densities from \ref{App:Subsection:Conditional_Density_MAGMAR(p,q)}, Eq.~\ref{ConditionalDensityMAGMARpq} need to be used.
In spite of the factorization, the infinite partial dependence is carried by the $\{w_i\}_{i = 1}^T$. For maximum likelihood estimation purposes, the initial density, $f_{U_1}$, may be dismissed, hence resulting in pseudo-MLE. It is easy to verify (via integration by substitution) that $\int f_{U_t,\hdots,U_1}(u_t,\hdots,u_1) d u_t = f_{U_{t-1},\hdots,U_1}$, which is to be expected of a joint density. Using the joint density derived in this section, maximum likelihood estimation can be performed. This is discussed in the next section.
\subsection{Estimation \& Modeling}
\label{subsection:MAGMAR:Estimation_Modeling}
\noindent In this section we describe how a univariate, strictly stationary time series may be modeled with the MAGMAR-Copula model. First, we describe the (pseudo-)maximum likelihood estimation given fixed AR and MAG orders and fixed bivariate copulas. Second, we describe challenges in \SPP{modeling and} model selection.\\
Given a fixed AR order $p$, fixed MA order $q$, fixed bivariate copula specifications $C_{\phi_1},\hdots, C_{\phi_p}$ and $C_{\theta_1},\hdots,C_{\theta_q}$ for the respective D-vines and observations $(u_T,\hdots,u_1)$, the estimation can be performed by (pseudo)-maximum likelihood estimation. The likelihood can be derived directly from the joint density (Eq.~\ref{joint_density}). The pseudo-log-likelihood of the MAGMAR$(1,1)$-Copula model of $T$ observations is given by
\begin{align}
\log L(\phi, \theta) =& \sum_{i = 0}^{T - 1} \Bigg( \log  c_{\theta} [ h_{\phi}(u_{T-i}, u_{T-i-1}), w_{T-i-1})] \nonumber
\\
& + \log c_{\phi}[u_{T-i}, u_{T-i-1}] \Bigg).
\label{Likelihood_MAGMAR(1,1)}
\end{align}
The likelihood of the MAGMAR$(p,q)$-Copula model is given in a completely analogous way using the Eq.~\ref{ConditionalDensityMAGMARpq} from \ref{App:Subsection:Conditional_Density_MAGMAR(p,q)} for the conditional densities. \SPP{We note that this likelihood is only valid if the true data-generating process is a MAGMAR$(p,q)$ process.} When marginal distributions and the adjustment transformation $\Psi$ are included, an additional term emerges \SPP{in accordance with proposition~\ref{Prop:Conditional_Density_MAGMAR(1,1)}.} If the MAG-copula is set to the independence copula, the likelihood for the classical copula-based time series emerges. \cite{chen2006estimation} establish desirable asymptotic properties (consistency and asymptotic normality) of the estimators for the case where the marginal distribution is estimated non-parametrically. More generally \cite{nagler2022stationary} establish these properties for multivariate stationary time series and arbitrary lags. The peculiarity of the likelihood in Eq.~\ref{Likelihood_MAGMAR(1,1)} is that it includes the term $\log c_{\theta}[h_{\phi}(u_{T-i}, u_{T-i-1}), w_{T-i-1}]$, originating from the MAG-copula.
The following proposition establishes the consistency of ML-estimators for the MAG$(1)$ model. The extension for the MAG$(q)$ model follows naturally.
\begin{prop}[Consistency of MLE of MAG$(1)$ copula process]
\label{Prop:MAG(1)_Consistency}
Let $\{U_t\}_{t \in \mathbb{Z}}$ be a MAG$(1)$-Copula time series with MAG-copula $C_{\theta_0}$. Let the parameter space $\Theta$ be compact and the MAG-Copula continuously differentiable w.r.t. parameters $\theta \in \Theta$ and continuous in its arguments, $(u,v)\in(0,1)^2$. Furthermore, let the time series \SPP{of residuals,} $\{\hat{w}_t\}_{t\in\mathbb{Z}}$ given as $\hat{w}_t = h_{\theta}(U_t,\hat{w}_{t-1})$, fulfill Assumptions A4.36-A4.38 from \cite{douc2014nonlinear}. Let $l_t(\theta) = - \log c_{\theta}[U_{t}, \hat{w}_{t-1}]$, then the \SP{(pseudo)} maximum-likelihood estimator
\begin{align*}
\hat{\theta} = \argmin_{\theta \in \Theta} \sum_{t = 2}^T l_t(\theta)
\end{align*}
is a consistent estimator for the unique $\theta_0 = \argmin_{\theta \in \Theta}  \mathbb{E}(l_t(\theta))$.
\end{prop}
\begin{proof}
See~\ref{App:Subsection:Proof:MAG(1)_consistency}
\end{proof}
\noindent
We note that the crucial step in establishing consistency is the stationarity and ergodicity of the time series of residuals, $\{\hat{w}_t\}$. To this end, theorem 4.40 from \cite{douc2014nonlinear} is used. The theorem is based on a contraction-on-average, which has to be established by verifying Assumptions A3.36-A3.38 for each copula individually. The calculations are similar to the ones for establishing stationarity and ergodicity of the MAGMAR$(1,1)$ time series model. \SPP{The major difference is that the $U_t$, which act as innovations are not independent and that the updating equation involves the conditional CDF associated to the MAG-copula and not the conditional quantile function. A natural extension of proposition~\ref{Prop:MAG(1)_Consistency} would be asymptotic normality. However, to establish such result a uniform WLLN for the Hessian matrix is needed, cf. condition M.5(i) of Theorem 4.4 by \cite{wooldridge1994estimation}. This is beyond the scope of this paper. Nevertheless, the simulation study in Sect.~\ref{Subsection:Simulation_Study:Asymptotics} indicates that asymptotic normality holds for the MAGMAR$(1,1)$ model.}\\
The likelihood has to be calculated using an iterative scheme \SPP{and initial values have to be set.} The algorithm for calculating the likelihood of a MAGMAR$(p,q)$-Copula time series \SPP{with} given AR \& MAG orders, $p$ \& $q$, fixed bivariate copula specifications, $C_{\phi_1},\hdots, C_{\phi_p}$ and $C_{\theta_1},\hdots,C_{\theta_q}$ for the respective D-vines, fixed parameters $(\phi_1,\hdots, \phi_p, \theta_1, \hdots, \theta_q)$ and observations $(u_T,\hdots,u_1)$ is \SPP{in spirit the same as for ARMA or GARCH models. For completion we provide the detailed steps in~\ref{App:Subsection:Additional:Likelihood_Iteration}.}\\
Now we turn our attention to challenges \SPP{in general modeling. While the MLE is well-behaved, if it is assumed that the observed data is generated from a MAGMAR$(p,q)$ time series model, there are some subtleties in modeling real data. When modeling a time series with copula-based approaches, the first step is to estimate the unconditional stationary distribution of the time series and transform the data to pseudo-observations, which are distributed standard-uniformly if the unconditional stationary distribution is correctly specified. The problem with this approach in the MAGMAR-context is that the unconditional stationary distribution of the regular MAGMAR model is not standard-uniform. This is why the adjusted MAGMAR$(p,q)$ model is introduced in Def.~\ref{Def:MAGMARpq}. In principle it is possible to jointly estimate the adjustment transformation $\Psi$ and the MAGMAR model, however, this approach is computationally very demanding. This is why we suggest the following approach: First we fit a regular MAGMAR-model on the pseudo-observation data. This will be misspecified. Nevertheless, we then use the estimated Model to obtain the implied non-standard-uniform stationary distribution and quantile-transform the pseudo-observation data w.r.t. this distribution. Then the estimation is repeated on this transformed data. This procedure can be repeated. In our simulation study we show that two iterations yield satisfactory results on a MAGMAR$(1,1)$ model.}
\\
\SPP{Now regarding model selection,} in contrast to linear models (i.e. ARMA), model selection in the MAGMAR-Copula context does not only amount to selecting appropriate model orders, i.e. AR \& MAG orders, but also appropriate copulas. In popular vine copula software such as the \texttt{VineCopula} package in \texttt{R} \citep{schepsmeier2015package}, 38 bivariate copula specifications are available. Hence estimating all possible models is computationally expensive. In order to be able to perform sensible model selection, we suggest not solely relying on information criteria but also doing pre-selection based on descriptive statistics i.e. pair plots, acf, pacf or kpacf \citep[see][for the kpacf]{mcneil2022time} and tail dependencies. \SPP{Similar to the (p)acf for classical linear ARMA models, (partial) auto-dependence measures can give information about sensible lag orders for the MAGMAR-copula model. If the partial auto-dependence measures are cut-off after some lag order, a MAG part is not necessary, because a cutoff partial auto-dependence is equivalent to the process being Markov, assuming that the relations are monotone. We formally prove this in \ref{App:Subsection:Autodependence_Properties}.}
When the partial auto-dependence measures are not cut-off, a long auto-dependence is indicated hence requiring inclusion of a MAG part in the model. The tail dependence measures can give information about sensible bivariate copula choices. No tail dependence indicates linear relations, while symmetric tail dependence indicates heavy-tailed dependence. Asymmetric tail dependence naturally indicates asymmetric copulas. \\
For further use, we introduce new notation. First, we abbreviate the t, normal, Gumbel, \SPP{Joe} and independence copulas by 't', 'n', 'g', 'j' and 'i', respectively. By the following example, we explain the new notation: We let MAGMAR($3,2$)-(n,g,t)-(i,t) denote the MAGMAR-Copula model with 3 AR-copulas and 2 MAG-copulas. The first AR-copula, governing the dependence of $U_t$ and $U_{t-1}$, is the normal copula, indicated by 'n'. The second AR-copula is the Gumbel copula indicated by 'g' and the third AR-copula is the t-copula, indicated by 't'. The first MAG-copula is the independence copula, indicated by 'i' and the second MAG-copula is the t-copula, indicated by 't'.
In the next section we model the US inflation with the MAGMAR-Copula model using the likelihood derived in this section.
\section{Simulation Study}
\label{Section:Simulation_Study}
\noindent
\SPP{In order to test the proposed model, we conduct the following simulation studies. First, we explore asymptotic properties of the estimators. To this end we re-run the MLE and use QQ-plots to assess the validity of asymptotic normality and consistency. Second, we exemplify the modeling approach on a toy example and last we compare the approach with other copula-based approaches.}
\subsection{Asymptotics}
\label{Subsection:Simulation_Study:Asymptotics}
\noindent
\SPP{We are interested in consistency and asymptotic normality of the maximum likelihood estimators in the MAGMAR-model. To this end, we simulate $m=200$ realizations of a MAGMAR$(1,1)$-(g)-(n) process with length $n = 1000$. The parameters are chosen as $\phi = 1.4$ for the Gumbel AR-copula and $\theta = 0.3$ for the Gaussian MAG-copula.
Based on each of the $m$ realizations of the process, we estimate the parameters of the above model by MLE, leaving us with an ensemble of $m=200$ realizations of the AR and MAG parameter estimators $\hat{\phi}$ and $\hat{\theta}$. Based on these estimators consistency and asymptotic normality may be inspected. To this end the QQ-plots of both CLT statistics of the estimators are inspected. The CLT statistics are calculated as $Z^{(\phi)}_{i,n} = \frac{\hat{\phi}_i - \phi_0}{\hat{\text{sd}}(\hat{\phi})}$ (and analog for $\theta$). The resulting QQ-plots are given in Fig. \ref{Fig:QQplot_MAGMAR(1,1)}. We observe that 1) the estimators in fact seem to be fluctuating around the true underlying value and 2) the fluctuations scale with $\sqrt{n}$ and approach normality for large-sample sizes, as is expected. As a side effect of repeating the estimation $m=200$ times for the MAGMAR$(1,1)$ model, we can also estimate the average computational demand for estimating the parameters of such model. The estimated average run time is roughly $1min$ per estimation.}\\
\SPP{Next we investigate whether the iterative fitting approach described above leads to asymptotic normality. To this end we simulate from an adjusted MAGMAR-(g)-(t) model with $\phi = 1.4$ and $(\rho, \nu) = (0.3, 4.2)$ and quantile-transform the simulated time series w.r.t. the standard normal quantile function. This is a more realistic modeling scenario than the one above. To estimate the adjusted MAGMAR-copula model, we first transform the time series $\{Y_t\}$ to pseudo-observations $\{U_t\}$ by applying the PIT w.r.t. the empirical CDF.
Then we perform two iterations of estimating the copula parameters and the stationary distribution, as described in Sect.~\ref{Subsubsection:MAGMAR:Distributional:Joint_distribution_Likelihood}.
This procedure is repeated $m=200$ times. We observe that the first-iteration estimators are in fact biased. The average of e.g. $\{\hat{\phi}_i\}_{i = 1}^m$ is $\bar{\phi} = 1.56$ with the true value being $\phi = 1.4$. This is expected because the marginal distribution of the pseudo-observations and the one of the MAGMAR model differ. The estimators of the second iteration, however, are well-behaved with average estimates $(\bar{\phi}, \bar{\rho}, \bar{\nu}) = (1.41, 0.29, 4.50)$. In Fig.~\ref{App:Fig:QQplot:MAG(1)_par1} in \ref{App:Section:Additional_Plots} we also provide the QQ-plots of the respective CLT-statistics, which indicate asymptotic normality for the estimators $\hat{\phi}$ and $\hat{\rho}$. Asymptotic normality for $\hat{\nu}$ does not seem to hold.}
\begin{figure}
\centering
\includegraphics[scale=0.3]{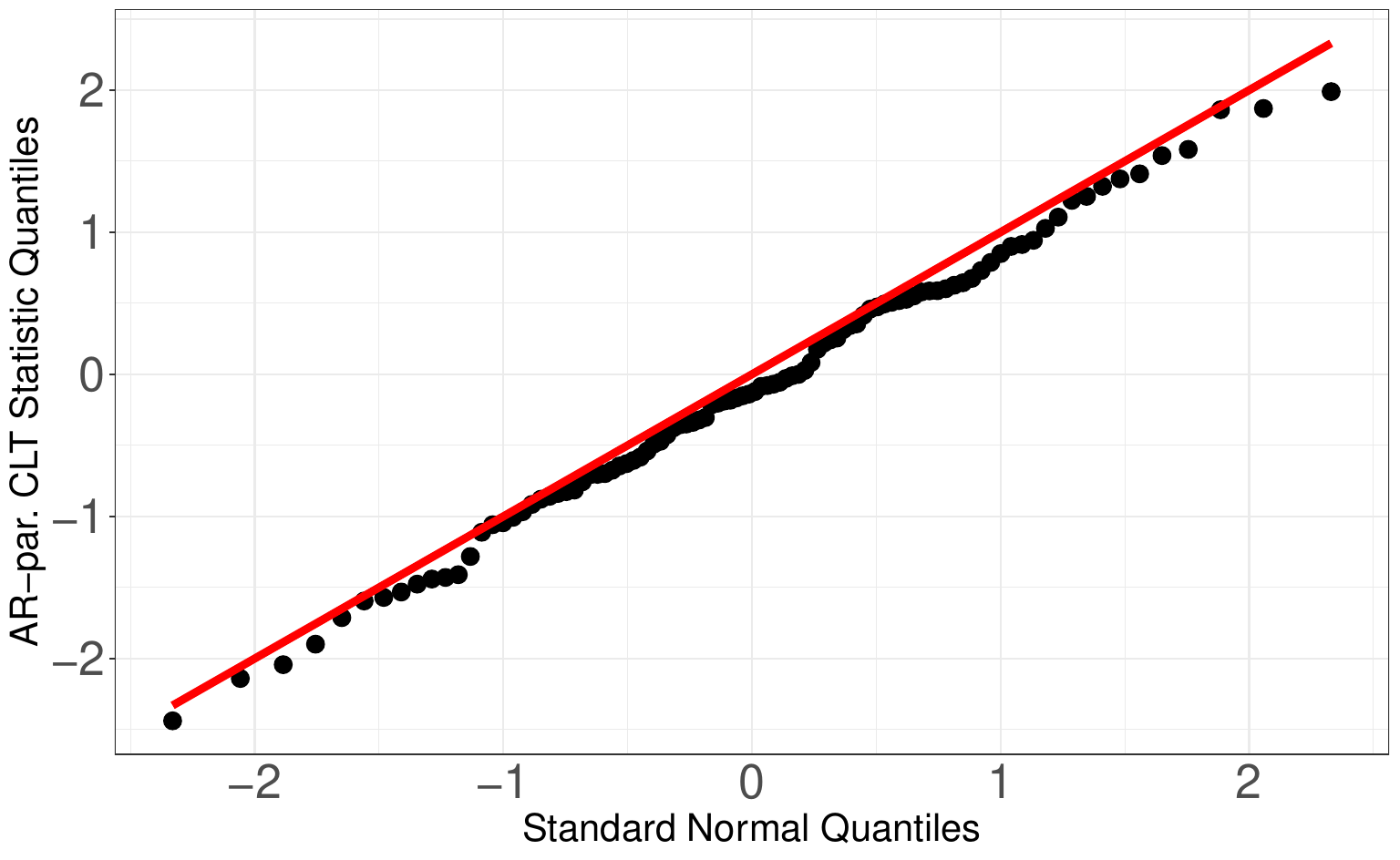}
\includegraphics[scale=0.3]{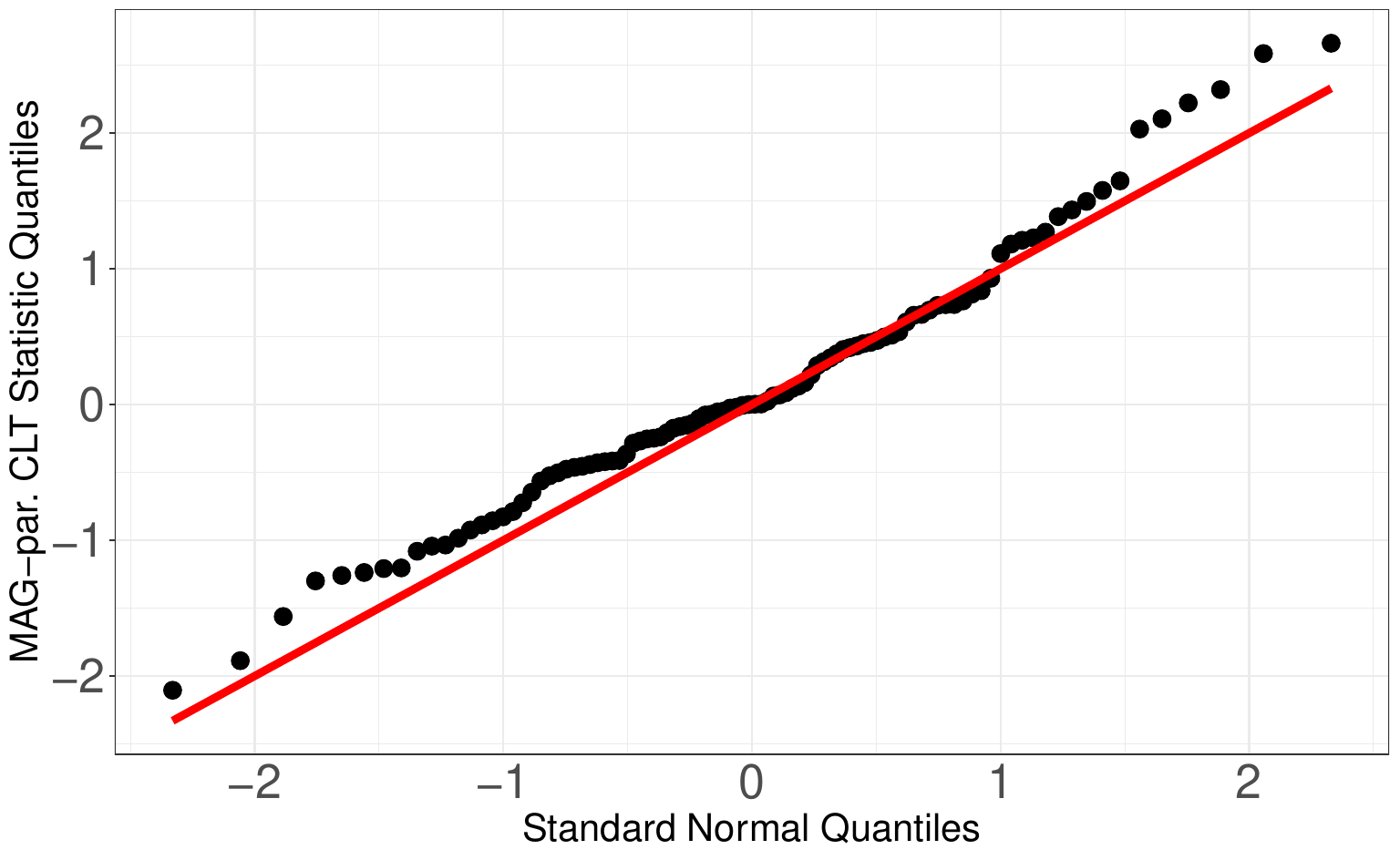}
\caption{QQ-plot for the CLT statistic of the AR and MAG-parameters of a MAGMAR$(1,1)$ model. The quantiles are obtained by 1) simulating $m=100$ MAGMAR$(1,1)$ processes of length $n=1000$, 2) estimating the parameters $m=100$ times, 3) estimating the standard deviation of the estimator based on the $m=100$ estimates 4) calculating the CLT statistic $Z_{i,n} = \frac{\hat{\phi}_i - \phi_0}{\text{sd}(\hat{\phi})}$, where $i = \in \{1,\hdots,m\}$ 5) calculating the sample quantiles.}
\label{Fig:QQplot_MAGMAR(1,1)}
\end{figure}
\subsection{Modeling Toy Example}
\label{Subsection:Simulation_Study:Toy_Example}
\noindent
\SPP{In this section we exemplify our modeling approach for the MAGMAR-Copula model. To this end we simulate $n = 1000$ observations $\{\tilde{U}_t\}_{t = 1}^n \sim \Psi$-MAGMAR$(1,1)$-(g)-(t) with $\phi = 1.4$ and $\theta = (0.3, 4)$. and quantile transform to $Y_t = \Phi^{-1}(\tilde{U}_t)$ for all $t$. Then we follow the modeling approach laid out in Sect.~\ref{Subsubsection:MAGMAR:Distributional:Joint_distribution_Likelihood}.
The exemplary time series is given in the upper panel of Fig.~\ref{Fig:SimI_ts_residual_plot}. Now we treat the time series as if we do not know the true data generating process and try to model it with the MAGMAR-copula model. As the first step, we transform the time series to pseudo-observations by using the empirical CDF, yielding the time series $\{V_t\}_{t=1}^n$. Second, we run automated model selection of the model orders $p$ and $q$ using the \texttt{auto.arima} command from the \texttt{forecast}-package in \texttt{R}, see \cite{hyndman2008automatic}. This yields $p=q=1$.
Third we perform automatic model selection for the copula of $(U_t, U_{t-1})$ to get an idea of suitable copulas. Model selection by BIC yields a Gumbel copula. We note that this copula is distinct from the AR-copula, however we can still use it as starting point. Model selection by AIC yields a $t$-copula for $(U_t, U_{t-1})$. Hence, we proceed to fit an adjusted-MAGMAR$(1,1)$-(g)-(t) and an adjusted-MAGMAR$(1,1)$-(t)-(t) model to the data. We recall that the purpose of the adjustment transformation $\Psi$ is to ensure that the resulting process has a standard uniform stationary distribution. For this we follow the procedure proposed in Sect.~\ref{Subsubsection:MAGMAR:Distributional:Joint_distribution_Likelihood} and estimate the parameters and the adjustment transformation iteratively with two iterations.
Comparison of information criteria then shows that the $\Psi^{(2)}$-MAGMAR$(1,1)$-(g)-($t$) model is more suitable. In the lower panel of Fig.~\ref{Fig:SimI_ts_residual_plot} the residuals of this model are plotted. We observe that the MAGMAR$(1,1)$-(g)-($t$) model successfully captures the dynamics.
\begin{figure}
\centering
\includegraphics[scale=0.5]{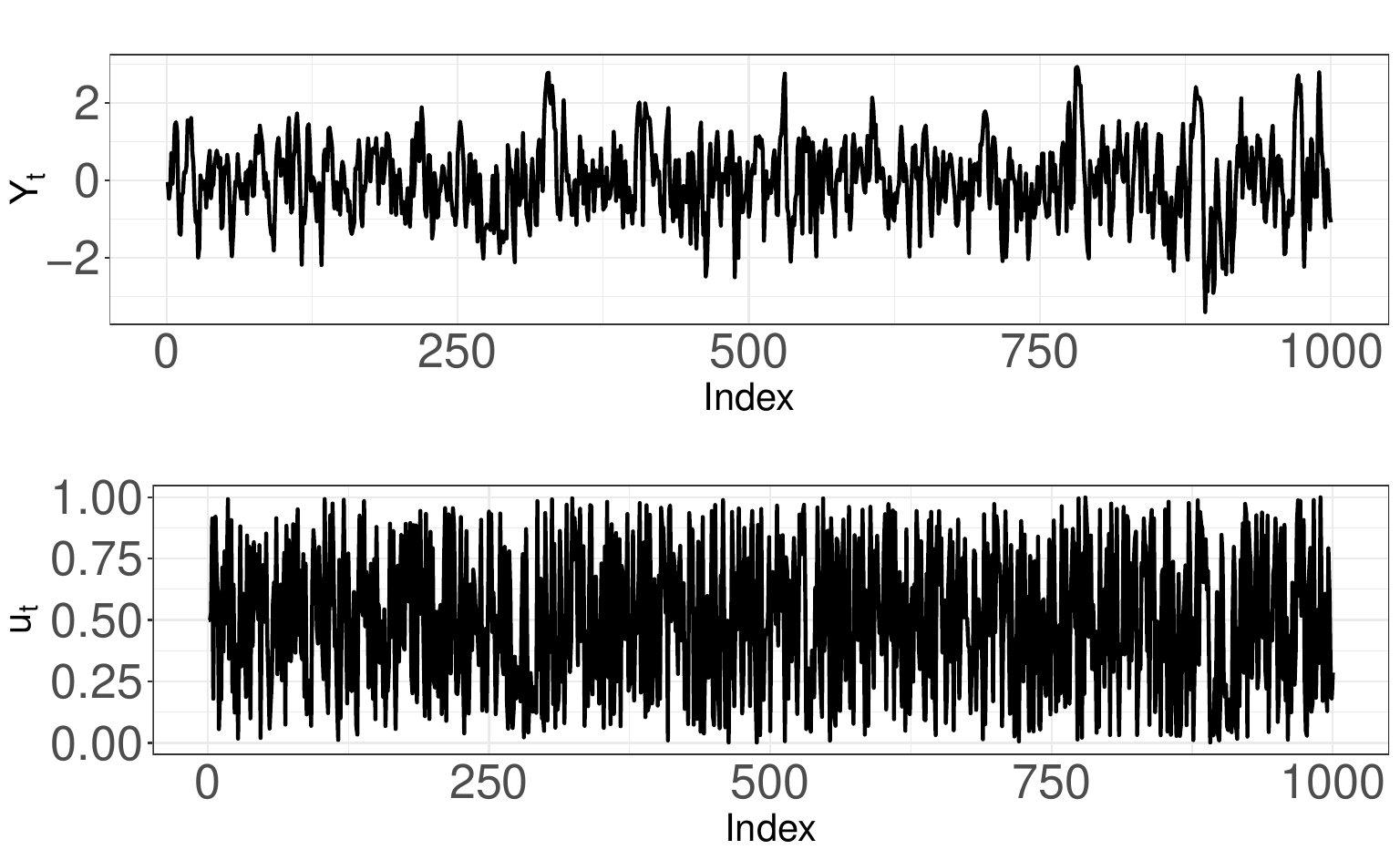}
\caption{The simulated time series $\{Y_t\}$ and residuals of a MAGMAR$(1,1)$ model fitted to the time series.}
\label{Fig:SimI_ts_residual_plot}
\end{figure}
}
\subsection{Comparison with other Non-Markovian Copula-Based Models}
\label{Subsection:Simulation_Study:Comparison}
\noindent
\SPP{We compare the proposed model with Markovian copula-based time series and the infinite partial dependence model by \cite{mcneil2022time}. We use the simulated time series from Sect.~\ref{Subsection:Simulation_Study:Toy_Example} and fit these models to the data. For the standard copula-based time series, we perform order selection again using the \texttt{auto.arima} command (cf. \cite{hyndman2008automatic}). This yields $p=3$. We then select the copula of $(U_t, U_{t-1})$ automatically by BIC, calculate the conditional variables $(U_t|U_{t-1})$ and $(U_{t-2}|U_{t-1})$ and proceed with automatic model selection.
We also take the fourth autoregressive copula into account.
The AIC of these two models is $-669.4702$ and $-667.4702$, respectively. The model by \cite{mcneil2022time} offers two potential versions for the given time series. First, replacing every copula in the infinite D-vine by a Gumbel copula, or taking all copulas as normal and only replacing a fixed amount by suitable copulas. We test both version obtaining AIC values of $-623.3599$ and $-677.0362$, respectively. For the MAGMAR-model we follow the procedure described above in Sect.~\ref{Subsection:Simulation_Study:Toy_Example} and obtain an AIC of $-819.0715$. This is expected since the DGP is the (adjusted) MAGMAR-copula model. Nevertheless, this simulation study confirms that the MAGMAR-copula model can produce complicated temporal dependencies, which can only be properly modeled using the correct model specification. Furthermore the simulation study showcases that a sparse model such as the $\Psi^{(2)}$-MAGMAR$(1,1)$ model in fact performs better than copula-based models with high autoregressive orders.}
\section{Application to US Inflation}
\label{Section:Application}
\noindent In this section, the modeling performance of the MAGMAR-Copula model is investigated. We compare its performance with the performance of the copula-based time series model with infinite partial dependence proposed by \cite{mcneil2022time}. Just as in \cite{mcneil2022time}, quarterly US inflation is modeled. We use the consumer price index data, \texttt{cpi} from the \texttt{R} package \texttt{tscopula} provided by the authors, and calculate the inflation as the growth rates of the \SPP{\texttt{cpi}.} The time series comprises 244 observations spanning from 1960Q1 to 2020Q4. For implementation the \texttt{R} package \texttt{rvinecopulib} by \cite{nagler2025rvinecopulib} is used. The package allows for fast vine-copula modeling using \texttt{C++} via \texttt{rcpp}, see \cite{eddelbuettel2011rcpp}.
\SPP{We consider two versions of the MAGMAR-Copula model. First the pure model without any adjustment transformation. Second, we consider the adjusted model, where we estimate the adjustment transformation and the parameters in an iterative way as described in Sect.~\ref{Subsubsection:MAGMAR:Distributional:Joint_distribution_Likelihood}
For this model we add the prefix '$\Psi^{(k)}$-, where $k\in\mathbb{N}$ denotes the number of iterations.}    
In order to estimate the adjustment transformation $\Psi$, the \texttt{R} package \texttt{kde1d} by \cite{nagler2025kde1d} is used. 
Before estimating the respective models, the time series is transformed to pseudo-observations by applying the rescaled empirical CDF to the data. The models' performances are assessed with the AIC and BIC. The results can be found in Table~\ref{AIC_BIC}. The models from \cite{mcneil2022time} \SPP{are} called 'IPD-name' \SPP{if all copulas are chosen as the same copula.} 'Name' is a placeholder for the copula, e.g. 'Gumbel'. \SPP{If only certain copulas are replaced by non-Gaussian copulas, we will follow the same naming scheme as for the MAGMAR-copula model. The results for the IPD-normal and IPD-Gumbel model are directly taken from their paper.} The MAGMAR-copula models follow the naming-scheme from Sect.~\ref{Subsubsection:MAGMAR:Distributional:Joint_distribution_Likelihood}. 
\begin{table*}
\caption{Number of Parameters, negative log-likelihood and Information criterion values of the mentioned models fitted to quarterly US inflation data (244 observations). \SPP{The 'IPD-' models} are the models for infinite partial copula dependence from \cite{mcneil2022time}. The MAGMAR$(p,q)$ model is the proposed model. The letters after the model orders indicate which copulas are used at the respective order, as is described in Sect.~\ref{Subsubsection:MAGMAR:Distributional:Joint_distribution_Likelihood}. The prefix '$\Psi^{(k)}$-' indicates that the adjustment transformation is conducted and $k$ iterations of estimation were performed. The highlighted values are the lowest ones in the respective column.}
\centering
\label{AIC_BIC}
\begin{tabular}{@{}lrrrc@{}}
\hline
Model & \# pars & nLL & AIC & BIC 
\\
\hline
\hline
IPD-normal & $6$ & $-98.31$ & $-184.62$ & $-163.64$
\\
IPD-Gumbel & $6$ & $-110.64$ & $-209.28$ & $-188.30$
\\
IPD-(j,t,n,g) & $5$ & $-111.12$ & $-210.25$ & $-189.26$
\\
\hline
MAGMAR(4, 0)-(g,g,t,g) & $5$ & $-111.05$ & $-212.10$ & $-194.61$
\\
MAGMAR(5, 0)-(g,g,t,g,g) & $6$ & $-111.27$ & $-210.54$ & $-189.56$
\\
\hline
MAGMAR(4, 1)-(n,n,n,n)-(n) & $5$ & $-93.48$ & $-176.95$ & $-159.47$
\\
MAGMAR(4, 1)-(g,i,n,g)-(t) & $5$ & $-112.16$ & $-214.31$ & $-196.83$
\\
$\Psi^{(2)}$-MAGMAR(4, 1)-(n,n,n,n)-(n) & $5$ & $-103.74$ & $-197.47$ & $-179.99$
\\
$\Psi^{(2)}$-MAGMAR(4, 1)-(g,i,n,g)-(t) & $5$ & $\mathbf{-116.80}$ & $\mathbf{-223.60}$ & $\mathbf{-206.12}$
\\
\hline
\end{tabular}
\end{table*}
\noindent The $\Psi^{(2)}$-MAGMAR(4,1)-(g,i,n,g)-(t) model minimizes the AIC, as well as the BIC. The necessity of including the Gumbel copula indicates asymmetric temporal dependence while the inclusion of the $t$-copula indicates heavy-tailedness of the dependence structure. All of the best performing models incorporate 4 autoregressive orders. This result \SPP{could indicate stochastic} seasonality in the quarterly inflation. The necessity of including a MAG part indicates that the data generating process of US inflation may not fulfill the Markov assumption. \SPP{This is also reflected by the fact that increasing the model order to $p=5$ in the Markovian models does not enhance the fit.} \SPP{Furthermore, we observe that the adjusted MAGMAR-copula models perform better than their unadjusted counterparts. This is expected since the model is misspecified without adjustment.} Our proposed model minimizes the AIC/BIC in comparison with the IPD model by \cite{mcneil2022time}. \SPP{This could indicate that the long-term temporal dependence structures have a complex form, which are best captured by the MAGMAR-model.} However, we note that the model by \cite{mcneil2022time} is still competitive. \SPP{In Fig.~\ref{Fig:Inflation_MAGMAR_fit} we show the in-sample fit of the best-performing model to the data. Since the model naturally produces probabilistic in-sample predictions, we report the conditional median and the conditional 90\%-confidence interval. The fit indicates that the $\Psi^{(2)}$-MAGMAR(4, 1)-(g,i,n,g)-(t) model seems to capture the dynamics of US inflation quite well. In Fig.~\ref{App:Fig:Inflation_residual} in \ref{App:Section:Additional_Plots} we also provide the residual time series transformed to the $z$-scale and its ACF/PACF. There seems to be not much structure left, which confirms that the model captures the dynamics well.}
\begin{figure}
\centering
\includegraphics[scale=0.5]{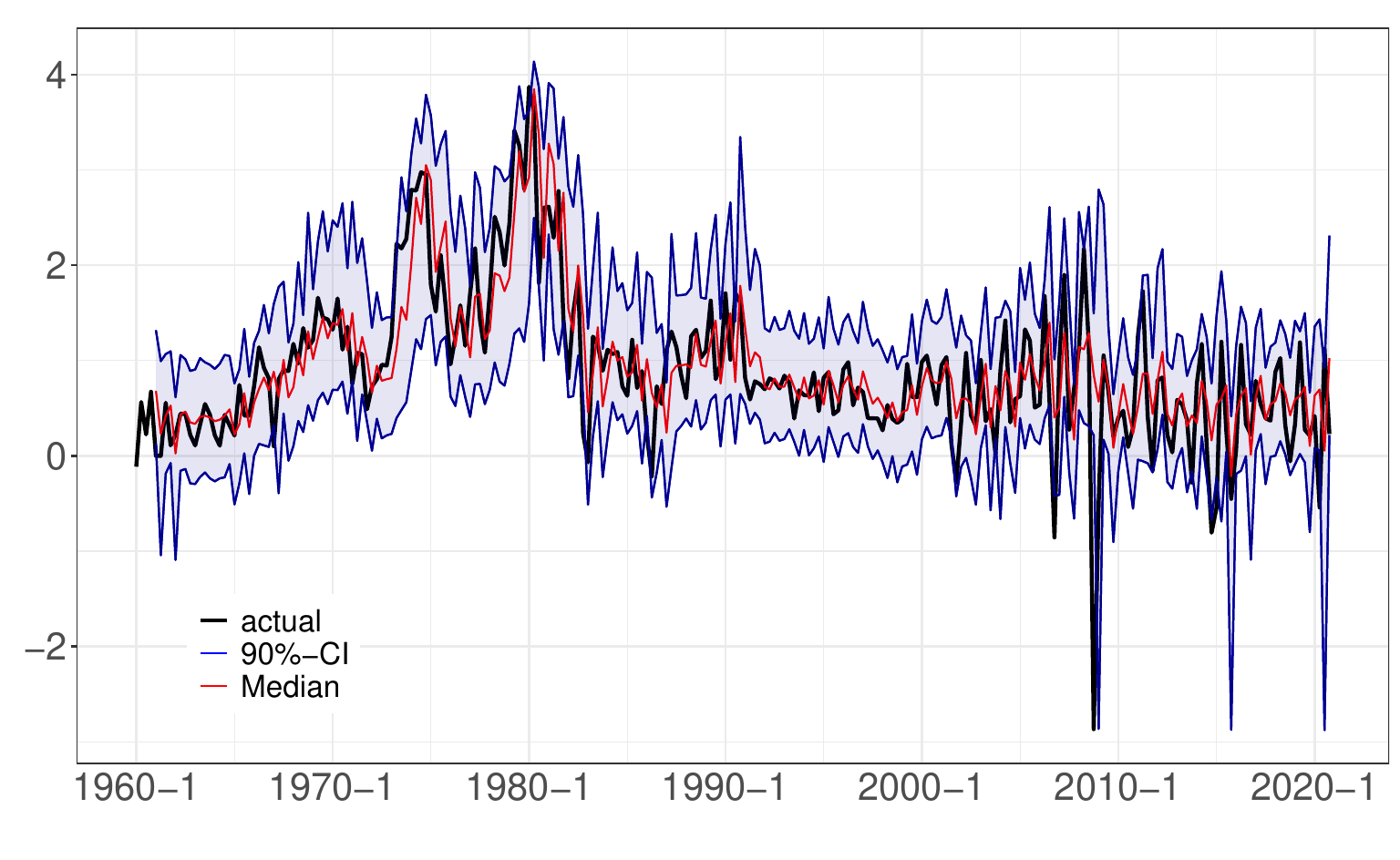}
\caption{In-sample fit of the KDE-quantile-transformed $\Psi^{(2)}$-MAGMAR-(g,i,n,g)-(t) model to US inflation.}
\label{Fig:Inflation_MAGMAR_fit}
\end{figure}
\section{Conclusion}
\label{Section:Conclusion}
\noindent Non-linear temporal dependencies as well as deviations from normal or elliptical distributions are apparent in many univariate time series. A popular approach to capture these effects is to employ a copula $C_{\phi}$ for the temporal dependence structure equipped with a suitable marginal distribution $F_X$. The resulting model is the well-known copula-based time series model. Alongside stationarity, the model assumes a finite Markov order $p$. This means that the distributional properties of a random variable of the time series can only be conditionally influenced by the preceding $p$ values. This results in the model not being able to properly capture long-term autoregressive effects, e.g. persistency in the mean or the variance. Properly modeling persistency in the mean and the variance was a major improvement for 'classical' time series modeling. To this end, the moving average term (MA) was introduced to autoregressive (AR) models resulting in the ARMA model. The generalization of the autoregressive conditional heteroskedasticity models (ARCH) was introduced to account for persistency in the variance, i.e. volatility clustering. Inspired by this approach we introduce a moving aggregate (MAG) part to the model equation of copula-based time series models. The moving aggregate is a non-linear function of the current and past innovations. The resulting model is the Moving Aggregate Modified Autoregressive Copula-Based Time Series Model (MAGMAR-Copula time series model). After introducing the model we discuss the model components and nested models.
We derive both the MAG($\infty$) and the AR($\infty$) representations and continue to derive distributional properties such as the conditional marginal distribution and the likelihood function. Furthermore, we prove that the MAGMAR$(1,1)$-Copula model is stationary when the AR-copula fulfills a contraction-on-average property. \SPP{We exemplify the modeling approach in a simulation study. In particular, we suggest to incorporate an adjustment transformation that guarantees that the models marginal distribution is well specified.}
In an application to US inflation we compare our model with the model for infinite order partial dependence of \cite{mcneil2022time}. Their model is also able to capture the infinite memory of a time series \SPP{using copulas.} The comparison in the application is conducted by using information criteria (AIC and BIC). We find that our model is competitive.\\
\SPP{Although our proposed MAGMAR-Copula model is quite general, it is a model for univariate time series. The incorporation of other variables as covariates is in principle possible but may not necessarily enhance the performance of the model in its current form. For the modeling of multivariate time series, we suggest to use copula-GARCH type-models and employ MAGMAR-Copulas for temporal modeling instead of ARMA-GARCH models (\cite{hu2006dependence, jondeau2006copula, aloui2014dependence} are standard references for the copula-GARCH model while \cite{berrisch2023modeling} is a recent application).}
Furthermore, we would like to explore the forecasting performance of our model. As mentioned earlier, our model is able to capture volatility clusters. It will be interesting to see how our model performs in comparison with ARCH-models. From a theoretical point of view, further distributional properties are of interest. \SPP{Especially interesting are asymptotic normality for the MLE of MAG$(1)$ processes and the stationarity of higher order processes. Further exploring theoretical properties of the common-innovations model by \cite{joe2014dependence} and also testing it in simulations and on real data is of high interest.}



\section*{Data Availability Statement}
\noindent
The data is publicly available through the \texttt{R} package \texttt{tscopula}.


\section*{Funding Statement}
\noindent
This research was (partially) funded in the course of TRR 391 \textit{Spatio-temporal Statistics for the Transition of Energy and Transport} (520388526) by the Deutsche Forschungsgemeinschaft (DFG, German Research Foundation).

\section*{Associated Key Words}
\noindent
Copula, Copula-based time series, Dependence Modeling, Non-linear time series, Persistency, Vine Copula.

\section*{Mathematics Subject Classification}
\noindent
62-02, 62H05, 62M09, 62M10, 62P20, 60G10

\bibliography{bibi.bib}

@article{hyndman2008automatic,
  title={Automatic time series forecasting: the forecast package for R},
  author={Hyndman, Rob J and Khandakar, Yeasmin},
  journal={Journal of statistical software},
  volume={27},
  pages={1--22},
  year={2008}
}

@article{nagler2025rvinecopulib,
  title={Package ‘rvinecopulib’},
  author={Nagler, Thomas and Vatter, Thibault},
  year={2025}
}

@article{nagler2025kde1d,
  title={Package ‘kde1d’},
  author={Nagler, Thomas and Vatter, Thibault},
  year={2025}
}

@article{eddelbuettel2011rcpp,
  title={Rcpp: Seamless R and C++ integration},
  author={Eddelbuettel, Dirk and Fran{\c{c}}ois, Romain},
  journal={Journal of statistical software},
  volume={40},
  pages={1--18},
  year={2011}
}

@book{billingsley1995probability,
  title={Probability and Measure},
  author={Billingsley, P.},
  isbn={9780471007104},
  lccn={gb95051456},
  series={Wiley Series in Probability and Statistics},
  url={https://books.google.de/books?id=z39jQgAACAAJ},
  year={1995},
  publisher={Wiley}
}

@book{devore2012modern,
  title={Modern mathematical statistics with applications},
  author={Devore, Jay L and Berk, Kenneth N and Carlton, Matthew A},
  volume={285},
  year={2012},
  publisher={Springer}
}

@book{schervish2012theory,
  title={Theory of statistics},
  author={Schervish, Mark J},
  year={2012},
  publisher={Springer Science \& Business Media}
}

@book{ibe2013markov,
  title={Markov processes for stochastic modeling},
  author={Ibe, Oliver},
  year={2013},
  publisher={Newnes}
}

@book{douc2014nonlinear,
  title={Nonlinear time series: Theory, methods and applications with R examples},
  author={Douc, Randal and Moulines, Eric and Stoffer, David},
  year={2014},
  publisher={CRC press}
}

@article{sklar1959fonctions,
  title={Fonctions de repartition an dimensions et leurs marges},
  author={Sklar, M},
  journal={Publ. inst. statist. univ. Paris},
  volume={8},
  pages={229--231},
  year={1959}
}

@article{bedford2001probability,
  title={Probability density decomposition for conditionally dependent random variables modeled by vines},
  author={Bedford, Tim and Cooke, Roger M},
  journal={Annals of Mathematics and Artificial intelligence},
  volume={32},
  pages={245--268},
  year={2001},
  publisher={Springer}
}

@article{bedford2002vines,
  title={Vines--a new graphical model for dependent random variables},
  author={Bedford, Tim and Cooke, Roger M},
  journal={The Annals of Statistics},
  volume={30},
  number={4},
  pages={1031--1068},
  year={2002},
  publisher={Institute of Mathematical Statistics}
}

@incollection{czado2010pair,
  title={Pair-copula constructions of multivariate copulas},
  author={Czado, Claudia},
  booktitle={Copula theory and its applications},
  pages={93--109},
  year={2010},
  publisher={Springer}
}

@article{aas2009pair,
  title={Pair-copula constructions of multiple dependence},
  author={Aas, Kjersti and Czado, Claudia and Frigessi, Arnoldo and Bakken, Henrik},
  journal={Insurance: Mathematics and economics},
  volume={44},
  number={2},
  pages={182--198},
  year={2009},
  publisher={Elsevier}
}

@article{czado2022vine,
  title={Vine copula based modeling},
  author={Czado, Claudia and Nagler, Thomas},
  journal={Annual Review of Statistics and Its Application},
  volume={9},
  number={1},
  pages={453--477},
  year={2022},
  publisher={Annual Reviews Inc.}
}

@article{demarta2005t,
  title={The t copula and related copulas},
  author={Demarta, Stefano and McNeil, Alexander J},
  journal={International statistical review},
  volume={73},
  number={1},
  pages={111--129},
  year={2005},
  publisher={Wiley Online Library}
}

@book{joe2014dependence,
  title={Dependence modeling with copulas},
  author={Joe, Harry},
  year={2014},
  publisher={CRC press}
}

@book{nelsen2006introduction,
  title={An introduction to copulas},
  author={Nelsen, Roger B},
  year={2006},
  publisher={Springer}
}

@article{smith2021implicit,
  title={Implicit copulas: An overview},
  author={Smith, Michael Stanley},
  journal={Econometrics and Statistics},
  year={2021},
  publisher={Elsevier}
}

@article{hofert2008sampling,
  title={Sampling archimedean copulas},
  author={Hofert, Marius},
  journal={Computational Statistics \& Data Analysis},
  volume={52},
  number={12},
  pages={5163--5174},
  year={2008},
  publisher={Elsevier}
}

@article{genest1993statistical,
  title={Statistical inference procedures for bivariate Archimedean copulas},
  author={Genest, Christian and Rivest, Louis-Paul},
  journal={Journal of the American statistical Association},
  volume={88},
  number={423},
  pages={1034--1043},
  year={1993},
  publisher={Taylor \& Francis}
}

@incollection{patton2009copula,
  title={Copula--based models for financial time series},
  author={Patton, Andrew J},
  booktitle={Handbook of financial time series},
  pages={767--785},
  year={2009},
  publisher={Springer}
}

@article{spanhel2016partial,
  title={The partial copula: Properties and associated dependence measures},
  author={Spanhel, Fabian and Kurz, Malte S},
  journal={Statistics \& Probability Letters},
  volume={119},
  pages={76--83},
  year={2016},
  publisher={Elsevier}
}

@article{hu2006dependence,
  title={Dependence patterns across financial markets: a mixed copula approach},
  author={Hu, Ling},
  journal={Applied financial economics},
  volume={16},
  number={10},
  pages={717--729},
  year={2006},
  publisher={Taylor \& Francis}
}

@article{jondeau2006copula,
  title={The copula-garch model of conditional dependencies: An international stock market application},
  author={Jondeau, Eric and Rockinger, Michael},
  journal={Journal of international money and finance},
  volume={25},
  number={5},
  pages={827--853},
  year={2006},
  publisher={Elsevier}
}

@article{aloui2014dependence,
  title={Dependence and extreme dependence of crude oil and natural gas prices with applications to risk management},
  author={Aloui, Riadh and A{\"\i}ssa, Mohamed Safouane Ben and Hammoudeh, Shawkat and Nguyen, Duc Khuong},
  journal={Energy Economics},
  volume={42},
  pages={332--342},
  year={2014},
  publisher={Elsevier}
}

@article{darsow1992copulas,
  title={Copulas and Markov processes},
  author={Darsow, William F and Nguyen, Bao and Olsen, Elwood T},
  journal={Illinois journal of mathematics},
  volume={36},
  number={4},
  pages={600--642},
  year={1992},
  publisher={Duke University Press}
}

@article{ibragimov2005copula,
  title={Copula-based dependence characteriztions and modeling for time series},
  author={Ibragimov, Rustam},
  journal={Harvard Institute of Economic Research Discussion Paper},
  number={2094},
  year={2005}
}

@article{brechmann2015copar,
  title={COPAR—multivariate time series modeling using the copula autoregressive model},
  author={Brechmann, Eike Christian and Czado, Claudia},
  journal={Applied Stochastic Models in Business and Industry},
  volume={31},
  number={4},
  pages={495--514},
  year={2015},
  publisher={Wiley Online Library}
}

@article{loaiza2018time,
  title={Time series copulas for heteroskedastic data},
  author={Loaiza-Maya, Rub{\'e}n and Smith, Michael S and Maneesoonthorn, Worapree},
  journal={Journal of Applied Econometrics},
  volume={33},
  number={3},
  pages={332--354},
  year={2018},
  publisher={Wiley Online Library}
}

@article{muller2022copula,
  title={A copula-based time series model for global horizontal irradiation},
  author={M{\"u}ller, Alfred and Reuber, Matthias},
  journal={International Journal of Forecasting},
  year={2022},
  publisher={Elsevier}
}

@article{beare2010copulas,
  title={Copulas and temporal dependence},
  author={Beare, Brendan K},
  journal={Econometrica},
  volume={78},
  number={1},
  pages={395--410},
  year={2010},
  publisher={Wiley Online Library}
}

@article{chen2006estimation,
  title={Estimation of copula-based semiparametric time series models},
  author={Chen, Xiaohong and Fan, Yanqin},
  journal={Journal of Econometrics},
  volume={130},
  number={2},
  pages={307--335},
  year={2006},
  publisher={Elsevier}
}

@article{pappert2023forecasting,
  title={Forecasting Natural Gas Prices with Spatio-Temporal Copula-based Time Series Models},
  author={Pappert, Sven and Arsova, Antonia},
  journal={arXiv preprint arXiv:2301.03328},
  year={2023}
}

@article{simard2015forecasting,
  title={Forecasting time series with multivariate copulas},
  author={Simard, Clarence and R{\'e}millard, Bruno},
  journal={Dependence modeling},
  volume={3},
  number={1},
  year={2015},
  publisher={De Gruyter Open Access}
}

@article{beare2015vine,
  title={Vine copula specifications for stationary multivariate Markov chains},
  author={Beare, Brendan K and Seo, Juwon},
  journal={Journal of Time Series Analysis},
  volume={36},
  number={2},
  pages={228--246},
  year={2015},
  publisher={Wiley Online Library}
}

@article{nagler2022stationary,
  title={Stationary vine copula models for multivariate time series},
  author={Nagler, Thomas and Kr{\"u}ger, Daniel and Min, Aleksey},
  journal={Journal of Econometrics},
  volume={227},
  number={2},
  pages={305--324},
  year={2022},
  publisher={Elsevier}
}

@article{smith2010modeling,
  title={Modeling longitudinal data using a pair-copula decomposition of serial dependence},
  author={Smith, Michael and Min, Aleksey and Almeida, Carlos and Czado, Claudia},
  journal={Journal of the American Statistical Association},
  volume={105},
  number={492},
  pages={1467--1479},
  year={2010},
  publisher={Taylor \& Francis}
}

@article{berrisch2023modeling,
  title={Modeling volatility and dependence of European carbon and energy prices},
  author={Berrisch, Jonathan and Pappert, Sven and Ziel, Florian and Arsova, Antonia},
  journal={Finance Research Letters},
  volume={52},
  pages={103503},
  year={2023},
  publisher={Elsevier}
}

@article{angus1994probability,
  title={The probability integral transform and related results},
  author={Angus, John E},
  journal={SIAM review},
  volume={36},
  number={4},
  pages={652--654},
  year={1994},
  publisher={SIAM}
}

@article{rosenblatt1952remarks,
  title={Remarks on a multivariate transformation},
  author={Rosenblatt, Murray},
  journal={The annals of mathematical statistics},
  volume={23},
  number={3},
  pages={470--472},
  year={1952},
  publisher={JSTOR}
}

@article{schepsmeier2015package,
  title={Package ‘vinecopula’},
  author={Schepsmeier, Ulf and Stoeber, Jakob and Brechmann, Eike Christian and Graeler, Benedikt and Nagler, Thomas and Erhardt, Tobias and Almeida, Carlos and Min, Aleksey and Czado, Claudia and Hofmann, Mathias and others},
  journal={R package version},
  volume={2},
  number={5},
  year={2015}
}

@article{mcneil2022time,
  title={Time series with infinite-order partial copula dependence},
  author={McNeil, Alexander John and Bladt, Martin},
  journal={Dependence Modeling},
  year={2022},
  publisher={York}
}

@article{bladt2025semiparametric,
  title={Semiparametric and parametric distributional forecasting of univariate time series using non-Gaussian ARMA models based on D-vines},
  author={Bladt, Martin and Dias, Alexandra and Han, Jialing and McNeil, Alexander J},
  journal={Canadian Journal of Statistics},
  year={2025},
  publisher={Wiley Online Library}
}

@article{engle1982autoregressive,
  title={Autoregressive conditional heteroscedasticity with estimates of the variance of United Kingdom inflation},
  author={Engle, Robert F},
  journal={Econometrica: Journal of the econometric society},
  pages={987--1007},
  year={1982},
  publisher={JSTOR}
}

@article{bollerslev1986generalized,
  title={Generalized autoregressive conditional heteroskedasticity},
  author={Bollerslev, Tim},
  journal={Journal of econometrics},
  volume={31},
  number={3},
  pages={307--327},
  year={1986},
  publisher={Elsevier}
}

@article{bollerslev1994arch,
  title={ARCH models},
  author={Bollerslev, Tim and Engle, Robert F and Nelson, Daniel B},
  journal={Handbook of econometrics},
  volume={4},
  pages={2959--3038},
  year={1994},
  publisher={Elsevier}
}

@book{francq2019garch,
  title={GARCH models: structure, statistical inference and financial applications},
  author={Francq, Christian and Zakoian, Jean-Michel},
  year={2019},
  publisher={John Wiley \& Sons}
}

@book{hamilton2020time,
  title={Time series analysis},
  author={Hamilton, James D},
  year={2020},
  publisher={Princeton university press}
}

@article{mckenzie1988some,
  title={Some ARMA models for dependent sequences of Poisson counts},
  author={McKenzie, Ed},
  journal={Advances in Applied Probability},
  volume={20},
  number={4},
  pages={822--835},
  year={1988},
  publisher={Cambridge University Press}
}

@article{wooldridge1994estimation,
  title={Estimation and inference for dependent processes},
  author={Wooldridge, Jeffrey M},
  journal={Handbook of econometrics},
  volume={4},
  pages={2639--2738},
  year={1994},
  publisher={Elsevier}
}

\appendix

\section{Supporting Results}
\label{App:Section:Supporting_Results}
\subsection{Likelihood Calculation for MAGMAR$(p,q)$}
\label{App:Subsection:Additional:Likelihood_Iteration}
The following calculation steps have to be performed to evaluate the likelihood of the MAGMAR$(p,q)$-Copula model.
\begin{itemize}
\item[1)] Set initial values for the first $s = \max\{p,q\}$ observations. (A natural choice for initial values in this context is 0.5 because for symmetric marginal distributions this corresponds to the expectation value. This is also the choice of initial values that we choose for our empirical application.)
\item[2)] For $i \in \{s+1,\hdots,T\}$ calculate $R_{\phi_1, \hdots, \phi_p}^1((u_{i-1},\hdots,u_{i-p}), u_i)$ iteratively.
\item[3)] For $i \in \{s+1,\hdots,T\}$ calculate
\begin{align*}
w_i = (R_{\theta_1, \hdots, \theta_q}^1)((w_{i-1}, \hdots, w_{i-q}), R_{\phi_1, \hdots, \phi_p}^1((u_{i-1},\hdots,u_{i-p}), u_i))
\end{align*}
iteratively.
\item[4)] For $i \in \{s+1,\hdots,T\}$ calculate $f_{U_i|\mathcal{F}_{i-1}}(u_i)$ (Eq.~\ref{ConditionalDensityMAGMARpq} for MAGMAR$(p,q)$ and Eq.~\ref{MAGMAR11ConditionalDensity} for MAGMAR$(1,1)$).
\item[5)] Take the product of all conditional densities for the likelihood or the negative sum of log-densities for the negative log-likelihood: $L = \prod_{i = s+1}^T f_{U_i|\mathcal{F}_{i - 1}}(u_i)$.
\end{itemize}

\subsection{AR and MAG representation of the MAGMAR(p,q) model}
\label{App:Subsection:AR_and_MAG_Representation}
\noindent The AR representation of the MAGMAR$(p,q)$-model (Eq.~\ref{Eq:MAGMAR_p_q}) can be derived by substituting for all $\{w_{t-i}\}_{i=1}^{\infty}$ iteratively. We start with the following expression:
\begin{align*}
w_t = R_{\theta_1, \hdots, \theta_q}^1 \left( (w_{t-1}, \hdots, w_{t-q}), R_{\phi_1, \hdots, \phi_p}((U_{t-1}, \hdots, U_{t-p}), U_t) \right).
\end{align*}
Now, $w_{t-1}, \hdots, w_{t-q}$ have to be substituted again. Formally this amounts to composing the function $R_{\theta_1, \hdots, \theta_q}^1(\circ, R_{\phi_1, \hdots, \phi_p}^1((U_{t-1}, \hdots, U_{t-p}), U_t): (0,1)^q \rightarrow (0,1)$ with a vector of functions:
\begin{align*}
w_t =& R_{\theta_1, \hdots, \theta_q}^1(\circ, R_{\phi_1, \hdots, \phi_p}^1((U_{t-1}, \hdots, U_{t-p}), U_t)
\\
&\circ
\begin{pmatrix}
R_{\theta_1, \hdots, \theta_q}^1(\circ, R_{\phi_1, \hdots, \phi_p}^1((U_{t-2}, \hdots, U_{t-p-1}), U_{t-1})
\\
\vdots
\\
R_{\theta_1, \hdots, \theta_q}^1(\circ, R_{\phi_1, \hdots, \phi_p}^1((U_{t-q-1}, \hdots, U_{t-p-q}), U_{t-q})
\end{pmatrix}
\circ \hdots
\end{align*}
Now the $\{w_{t - i}\}_{i = 2}^{\infty}$ in each vector has to be substituted for again. The input for each element of the vector is again a $q$-dimensional vector. The total dimension of the input vector at the third level hence is given by $q^2$. We shorten notation by introducing the $(\stackrel{\rightarrow}{\bigcirc}_d)_{i \in \mathbb{N}_0}$ symbol which acts as follows for a sequence of functions $f_i: \mathbb{R}^d \rightarrow \mathbb{R}$, $i \in \mathbb{N}_0$
\begin{align*}
(\stackrel{\rightarrow}{\bigcirc}_d)_{i \in \mathbb{N}_0} f_i(\circ) = f_0
\circ
\begin{pmatrix}
f_1
\\
\vdots
\\
f_d
\end{pmatrix}
\circ
\begin{pmatrix}
f_{d + 1}
\\
\vdots
\\
\vdots
\\
f_{d + d^2}
\end{pmatrix}
\circ \hdots
\end{align*}
Using the newly introduced notation, the AR representation may be written as
\begin{align}
w_t = (\stackrel{\rightarrow}{\bigcirc}_q)_{i \in \mathbb{N}_0} R^1_{\theta_1, \hdots, \theta_q}(\circ, R_{\phi_1, \hdots, \phi_{p}}((U_{t - 1 - i}, \hdots, U_{t - p - i}), U_{t-i})
\end{align}
Using the same notation, the MAG($\infty$)-representation can be written as
\begin{align*}
U_t =& R_{\phi_1, \hdots, \phi_p}^{-1} (\circ, R_{\theta_1, \hdots, \theta_q}^{-1} ( (w_{t-1}, \hdots, w_{t-q}), w_t))
\\ 
& \circ
\begin{pmatrix}
R_{\phi_1, \hdots, \phi_p}^{-1} (\circ, R_{\theta_1, \hdots, \theta_q}^{-1} ( (w_{t-2}, \hdots, w_{t-q-1}), w_{t - 1}))
\\
\vdots
\\
R_{\phi_1, \hdots, \phi_p}^{-1} (\circ, R_{\theta_1, \hdots, \theta_q}^{-1} ( (w_{t-p-1}, \hdots, w_{t-p-q}), w_{t - p}))  \end{pmatrix}
\circ \hdots
\\
=& (\stackrel{\rightarrow}{\bigcirc}_p)_{i \in \mathbb{N}_0}
R_{\phi_1, \hdots, \phi_p}^{-1} (\circ, R_{\theta_1, \hdots, \theta_q}^{-1} ( (w_{t-1-i}, \hdots, w_{t-q-i}), w_{t-i}))
\end{align*}
\subsection{Conditional Distribution MAGMAR(p,q)}
\label{App:Subsection:Conditional_Density_MAGMAR(p,q)}
\noindent The conditional distribution of $U_t$ given all past observations is given by:
\begin{align}
f_{U_t|\mathcal{F}_{t-1}}(x) =& c_{\theta_q}\left[ R_{\phi_1, \hdots, \phi_p}^1((u_{t-1}, \hdots, u_{t-p}), x), R^1_{\theta_1, \hdots, \theta_{q-1}}((w_{t-2}, \hdots, w_{t-q}), w_{t-1})  \right] \nonumber
\\
& \prod_{i = 0}^{p - 1} c_{\phi_{p - i}} \bigg[ R^1_{\phi_1, \hdots, \phi_{p - 1 - i}}((u_{t-1}, \hdots, u_{t-p+1+i}), x), \nonumber
\\
& R^2_{\phi_1, \hdots, \phi_{p - 1 - i}}((u_{t-1}, \hdots, u_{t-p+1+i}), u_{t-p+i}) \bigg].
\label{ConditionalDensityMAGMARpq}
\end{align}
The proof is analog to the proof for the MAGMAR(1,1)-Copula case.
\subsection{Joint Conditional Distribution of the MAGMAR model}
\label{App:Subsection:JointConditionalDistribution}
\noindent The joint conditional distribution of $(U_t, U_{t-1})|\mathcal{F}_{t-2}$ is derived. The density transform theorem for multivariate transformations \citep[see][5.6]{devore2012modern} is utilized. The two random variables $w_t$ and $w_{t-1}$ are transformed by the function $G$ to the random variables $U_t$ and $U_{t-1}$. The appropriate choice for $G$ and the corresponding inverse and Jacobian are given by,
\begin{align*}
\begin{pmatrix}
z
\\
w
\end{pmatrix}
:=&
G( (a, b)^T ) =
\begin{pmatrix}
h_{\phi}^{-1}(h_{\theta}^{-1}(a, w_{t-2}), u_{t-2})
\\
h_{\phi}^{-1}(h_{\theta}^{-1}(b, a), h_{\phi}^{-1}(h_{\theta}^{-1}(a, w_{t-2}), u_{t-2}))
\end{pmatrix},
\\
=&
\begin{pmatrix}
h_{\phi}^{-1}(h_{\theta}^{-1}(a, w_{t-2}), u_{t-2})
\\
h_{\phi}^{-1}(h_{\theta}^{-1}(b, a), z)
\end{pmatrix}.
\\
\begin{pmatrix}
a
\\
b
\end{pmatrix}
:=&
G^{-1}(z,w) =
\begin{pmatrix}
h_{\theta}(h_{\phi}(z, u_{t-2}), w_{t-2})
\\
h_{\theta}(h_{\phi}(w, z), h_{\theta}(h_{\phi}(z, u_{t-2}), w_{t-2})) 
\end{pmatrix},
\\
=&
\begin{pmatrix}
h_{\theta}(h_{\phi}(z, u_{t-2}), w_{t-2})
\\
h_{\theta}(h_{\phi}(w, z), a) 
\end{pmatrix}
\\
\mathcal{J} =&
\begin{pmatrix}
c_{\theta}[h_{\phi}(z, u_{t-2}), w_{t-2}] c_{\phi}[z, u_{t-2}] & 0
\\
\partial_z (\hdots) & c_{\theta}[h_{\phi}(w,z), h_{\theta}(h_{\phi}(z,u_{t-2}), w_{t-2})] c_{\phi}[w, z]
\end{pmatrix},
\\
|\mathcal{J}| =& c_{\theta}[h_{\phi}(z, u_{t-2}), w_{t-2}] c_{\phi}[z, u_{t-2}] c_{\theta}[h_{\phi}(w,z), h_{\theta}(h_{\phi}(z,u_{t-2}), w_{t-2})] c_{\phi}[w, z].
\end{align*}
Here $u_{t-2}$ and $w_{t-2}$ ought to be understood as (deterministic by conditioning) constants. Since $w_t$ and $w_{t-1}$ are independent and uniformly distributed their contribution to the conditional density is just $1_{(0,1)^2}$. The density can be formulated as
\begin{align*}
&f_{U_t, U_{t-1}|\mathcal{F}_{t-2}}(z, w) = f_{w_t, w_{t-1}}(z, w) |\mathcal{J}| 
\\
=& c_{\theta}[h_{\phi}(z, u_{t-2}), w_{t-2}] c_{\phi}[z, u_{t-2}] c_{\theta}[h_{\phi}(w,z), h_{\theta}(h_{\phi}(z,u_{t-2}), w_{t-2})] c_{\phi}[w, z].
\end{align*}
The partial conditional density of $U_t, U_{t-k}|U_{t-1},\hdots,U_{t-k+1},U_{t-k-1},\hdots$ can again be derived by using the density transformation theorem. The random variables $w_t$ and $w_{t-k}$ are transformed to the random variales $U_t$ and $U_{t-k}$. The appropriate choice of the transformation $G$ is,
\begin{align*}
G((a,b)^T) =&
\begin{pmatrix}
z
\\
w
\end{pmatrix}
= 
\begin{pmatrix}
h_{\phi}^{-1}(h_{\theta}^{-1}(a, w_{t-k-1}), u_{t-k-1})
\\
h_{\phi}^{-1}(h_{\theta}^{-1}(b, w_{t-1}(a)), u_{t-1})
\end{pmatrix}
\\
G^{-1}(z,w) =&
\begin{pmatrix}
a
\\
b
\end{pmatrix}
=
\begin{pmatrix}
h_{\theta}(h_{\phi}(z, u_{t-k-1}), w_{t-k-1})
\\
h_{\theta}(h_{\phi}(w, u_{t-1}), w_{t-1}(z))
\end{pmatrix}
\\
\mathcal{J} =& 
\begin{pmatrix}
c_{\theta}[h_{\phi}(z, u_{t-k-1}), w_{t-k-1}) c_{\phi}[z, u_{t-k-1}] & 0
\\
\partial_{z} (\hdots) & c_{\theta}[h_{\phi}(w, u_{t-1}), w_{t-1}(z)] c_{\phi}[w, u_{t-1}]
\end{pmatrix}
\\
|\mathcal{J}| =& c_{\theta}[h_{\phi}(z, u_{t-k-1}), w_{t-k-1}) c_{\phi}[z, u_{t-k-1}] c_{\theta}[h_{\phi}(w, u_{t-1}), w_{t-1}(z)] c_{\phi}[w, u_{t-1}].
\end{align*}
Here $w_{t-1}(z)$ means evaluating the AR($k-1$) representation of $w_{t-1}$ at $u_{t-1},\hdots,u_{t-k+1}$ and $w_{t-k} = h_{\theta}(h_{\phi}(z, u_{t-k-1}), w_{t-k-1})$ and $U_{t-k} = z$.
\subsection{Verifying conditions for stationarity for several copulas.}
\label{App:Subsection:ContractionCopulas}
\noindent
Here we verify the assumptions from \cite{douc2014nonlinear} to establish the stationarity and ergodicity of the MAGMAR$(1,1)$ time series for the normal, $t$ and Clayton Copula. For the sake of completeness, we give the assumptions from \cite{douc2014nonlinear} here too. The assumptions revolve around a stochastic recurrence equation, $X_t = f_{Z_t}(X_{t-1})$. The assumptions are as follows:\\
\textbf{Assumption A4.36:} $Z_t$ is strictly stationary and ergodic.\\
\textbf{Assumption A4.37:} There is a measurable function $K_z$ such that $d(f_z(x), f_z(y)) \leq K_z d(x,y)$, with $\mathbb{E}(\log^+(K_{Z})) \leq \infty$ and $\mathbb{E}(\log(K_{Z})) < 0$\\
\textbf{Assumption A4.38:} There is a $x \in \mathbb{R}$ s.t. $\mathbb{E}(\log^+ (d(x, f_Z(x))) < \infty$.\\
In order to use the result from \cite{douc2014nonlinear}, we recognize that the MAGMAR$(1,1)$ updating equation can be interpreted as stochastic recurrence equation. The MAGMAR$(1,1)$ updating equation may be written as $U_t = f_{Z_t}(U_{t-1})$, where $Z_t$ is the stationary and ergodic MAG$(1)$ process and $f_{z}(x) = h^{-1}_{\alpha}(z,x)$. Now we verify the conditions.
\subsubsection{Normal copula as AR-Copula}
\noindent
Let $\{U_t\}$ be a MAGMAR$(1,1)$-Copula time series where the AR-Copula is the normal Copula with parameter $|\alpha| < 1$. The conditions for the theorem are \textbf{not} fulfilled for $\kappa = \text{id}$. However, we are able to show that there is a mapping $\kappa:\mathbb{R} \rightarrow (0,1)$ such that the theorem is applicable to the time series $\{Y_t\}$ with $Y_t = \kappa^{-1}(U_t)$. The corresponding stochastic recurrence equation is specified by $\tilde{f}_z(x) = \kappa^{-1}(h_{\alpha}^{-1}(z,\kappa(x)))$. \footnote{
$U_t = f_{Z_t}(U_{t-1}) \iff Y_t = \kappa^{-1}[U_t] = \kappa^{-1}[f_{Z_t}(\kappa[\kappa^{-1}[U_{t-1}]]) = \kappa^{-1}[f_{Z_t}(\kappa[Y_{t-1}])$. \SPP{As an example, the reader may think of $\kappa$ as $\kappa = \Phi$, which will also be used in the following.}} Now we can check the assumptions.\\
\textbf{Assumption A4.36:} $Z_t = V_t = h^{-1}_{\theta}(w_t, w_{t-1})$ is stationary and $1$-dependent, and hence ergodic.\\
\textbf{Assumption A4.37:} We choose $d(x,y) = |x - y|$. Then by the mean value theorem, it follows that
\begin{align}
|\tilde{f}_{z}(x) - \tilde{f}_{z}(y)| \leq & |\max_{\zeta \in (x,y)} \tilde{f}_{z}'(\zeta)| |x - y|
\\
\leq & |\max_{\zeta \in \mathbb{R}} \tilde{f}_{z}'(\zeta)| |x - y|
\end{align}
where 
\begin{align*}
\tilde{f}_z'(\zeta) = \alpha \frac{\phi[\alpha\Phi^{-1}(\kappa(x)) + (1 - \alpha^2)^{\frac{1}{2}} \Phi^{-1}(z)]}{\phi[\Phi^{-1}(\kappa(x))]} \frac{\kappa'(x)}{\kappa'(\kappa^{-1}(\Phi(\alpha\Phi(\kappa(x)) + (1 - \alpha^2)^{\frac{1}{2}} \Phi^{-1}(z))}.
\end{align*}
We notice that for $\kappa = \Phi$ we obtain $\tilde{f}_z'(\zeta) = \alpha < 1$. Hence, choosing $\kappa = \Phi$ and $K_z = \alpha$ satisfies the requirement.\\
\textbf{Assumption A4.38:} There is a $x \in \mathbb{R}$ s.t. $\mathbb{E}(\log^+ (d(x, \tilde{f}_Z(x))) < \infty$. Choose e.g. again $\kappa = \Phi$, then $\mathbb{E}(\log^+ (d(x, \tilde{f}_Z(x))) = \log^+(\alpha) < \infty$.\\
From $\{Y_t\}$ being stationary it follows that $\{\kappa(Y_t)\} = \{U_t\}$ is also stationary.
\subsubsection{$t$-copula as AR-copula:}
\noindent
Let $C_{\alpha}$ be given as the $t$-copula with parameters $\alpha = (\rho, \nu)$. The inverse $h$-function is given as (cf. \cite{smith2010modeling})
\begin{align*}
h^{-1}_{\rho, \nu}(z,x) = t_{\nu} \bigg( t_{\nu + 1}^{-1}(z)\bigg[ \frac{(\nu + (t_{\nu}^{-1}(x))^2)(1 - \rho^2)}{\nu + 1} \bigg]^{\frac{1}{2}} + \rho t_{\nu}^{-1}(x) \bigg)
\end{align*}
Choosing $\kappa = t_{\nu}$, $\tilde{f}_z$ is given as
\begin{align*}
\tilde{f}_z(x) = t_{\nu+1}^{-1}(z)\bigg[ \frac{(\nu + x^2)(1 - \rho^2)}{\nu + 1} \bigg]^{\frac{1}{2}} + \rho x 
\end{align*}
\textbf{Assumption A4.37:} We \SPP{again write} $|\tilde{f}_z(x) - \tilde{f}_z(y)| \leq K_z |x - y|$ where
$K_z = \max_{\zeta \in \mathbb{R}} |\tilde{f}_z(x)'|$. The first derivative of $\tilde{f}_z$ is given as
\begin{align*}
\tilde{f}'_z(x) =  t_{\nu+1}^{-1}(z)\bigg[ \frac{(\nu + x^2)(1 - \rho^2)}{\nu + 1} \bigg]^{-\frac{1}{2}} \frac{x(1 - \rho^2)}{\nu + 1} + \rho.
\end{align*}
This expression is monotone increasing and obtains its absolute maximum at $x \rightarrow \pm \infty$. Hence $K_z = |t^{-1}_{\nu+1}(z) \bigg( \frac{1 - \rho^2}{\nu + 1} \bigg)^{\frac{1}{2}} + \rho|$. Then $\mathbb{E}(\log|K_{U_t}|)$ is the expectation of the logarithm of a folded unstandardized $t$-distribution. The expression can be evaluated numerically by e.g. Monte-Carlo simulation to show that $\mathbb{E}(\log |K_{V_t}|) < 0$. \SPP{In fact we observe that only for $\nu$ very close to zero, $0 < \nu < 0.2$, the expression is positive.} Hence part a) from condition A4.37 from \cite{douc2014nonlinear} is satisfied. Condition \SPP{b)} amounts to showing that $\mathbb{E}(\log^+ K_{V_t}) < \infty$. We may write $\mathbb{E}(\log^+ K_{V_t}) = \mathbb{E}(\log K_{V_t} \mathds{1}_{\{K_{V_t} > 1\}}) = \mathbb{E}(\log |Y \bigg( \frac{1 - \rho^2}{\nu + 1} \bigg)^{\frac{1}{2}} + \rho| \mathds{1}_{\{ Y > K_{\nu,\rho} \}})$ where $Y \sim t_{\nu + 1}$. Now there are two thresholds $s_1$ and $s_2$ depending on $\nu$ and $\rho$ such that we can write the expectation as $ \mathbb{E}(\log |Y \big( \frac{1 - \rho^2}{\nu + 1} \big)^{\frac{1}{2}} + \rho| \mathds{1}_{\{ Y > K_{\nu,\rho} \}}) = \int_{-\infty}^{s_1} \log[-y\big(\frac{1 - \rho^2}{\nu + 1} \big)^{\frac{1}{2}} + \rho] f_{t_{\nu+1}} dy + \int_{s_2}^{\infty}\log[y\big(\frac{1 - \rho^2}{\nu + 1} \big)^{\frac{1}{2}} + \rho] f_{t_{\nu+1}}(y) dy$. We observe that integrands are logarithmic, weighted by the $t$-distribution density. Since linear functions dominate the logarithm asymptotically and the first moment of the $t$-distribution exists for $\nu>2$, $\mathbb{E}(\log^+ K_{V_t})$ is finite as long as $\nu>2$.\\ \textbf{Assumption A4.38:} The assumption can be shown to be satisfied for e.g. $x=0$. Then $d(x,\tilde{f}_{z}(x)) = t_{\nu+1}^{-1}(z)\bigg[ \frac{\nu(1 - \rho^2)}{\nu + 1} \bigg]^{\frac{1}{2}}$. The existence follows again by the existence of the first moment of the $t$-distribution for $\nu > 2$.
\subsubsection{The Clayton Copula as AR-Copula:}
\noindent
The inverse $h$-function of the Clayton copula with parameter $\theta$ is given as,
\begin{align*}
h^{-1}(z,x) = ([z x^{\theta+1}]^{-\frac{\theta}{\theta + 1}} + 1 - x^{-\theta})^{-\frac{1}{\theta}}
\end{align*}
We choose $\kappa$ such that $\kappa(x) = x^{-\frac{1}{\theta}}$ and $\kappa^{-1}(x) = x^{-\theta}$ respectively. Then $\tilde{f}_z$ is given such that $\tilde{f}_z(x) = z^{-\frac{\theta}{\theta+1}} x + 1 - x = 1 + x(z^{-\frac{\theta}{\theta+1}} - 1)$. We can immediately see that an appropriate choice for $K_z$ is given as $K_z = |(z^{-\frac{\theta}{\theta+1}} - 1)|$.\\
\textbf{Assumption A4.37:} $\mathbb{E}(\log K_V) = \int_{0}^{1} \log |y^{-\frac{\theta}{\theta+1}} - 1| dy<\infty$ and $\mathbb{E}(\log^+ K_V)$ can be verified numerically.\\
\textbf{Assumption A4.38:} For $x = 0$, $\mathbb{E}(\log^+(d(x,f_Z(x))) = 1 < \infty$.

\subsection{The Autodependence of Markov and $q$-Dependent Processes}
\label{App:Subsection:Autodependence_Properties}
\noindent \SPP{Here we provide the proofs of the following two claims.
\begin{prop}
\label{Prop:Autodependence_Properties}
$\{X_t\}_{t\in\mathbb{Z}}$ is a Markov($p$) process $\iff$ for $r > p$
\begin{align}
F(x_0, x_r) =&\mathbb{P}(X_t \leq x_0, X_{t-r} \leq x_r|X_{t-1},\hdots,X_{t-r+1})
\\
=& \mathbb{P}(X_t \leq x_0|X_{t-1},\hdots,X_{t-r+1}) \times \mathbb{P}(X_{t-r} \leq x_r|X_{t-1},\hdots,X_{t-r+1})
\end{align}
\end{prop}
}
\SPP{This can be seen by considering the following scenario. Let $\{X_t\}_{t\in\mathbb{Z}}$ be a Markov($p$) process. Let $r > p$, then the partial dependence of $(X_t,X_{t-r}|X_{t-1},\hdots,X_{t-r})$ is independent: 
\begin{align}
F(x_0, x_r) =& \mathbb{P}(X_t \leq x_0, X_{t-r} \leq x_r|X_{t-1},\hdots,X_{t-r+1}) \nonumber
\\
=& \mathbb{P}(X_t \leq x_0|X_{t-1},\hdots,X_{t-r+1}, X_{t-r} \leq x_r) \times \mathbb{P}(X_{t-r} \leq x_r|X_{t-1},\hdots,X_{t-r+1}) \nonumber
\\
=& \mathbb{P}(X_t \leq x_0|X_{t-1},\hdots,X_{t-r+1}) \times \mathbb{P}(X_{t-r} \leq x_r|X_{t-1},\hdots,X_{t-r+1}) = F(x_0) \times F(x_r)
\end{align}
where in the first line we have used the law of total probability and in the second line that $\{X_t\}_{t\in\mathbb{Z}}$ is a Markov$(p)$ process and $r>p$, hence conditioning on $X_{t-r}$ can be dropped in the first probability measure. For the other direction, let $F_{X_{t},X_{t-r}|\mathcal{F}_{t-r+1}^{t-1}}$ be independent for all $r > p$. Then
\begin{align}
F_{X_t|\mathcal{F}_{t-r}^{t-1}}(x_0) =& \mathbb{P}(X_t \leq x_0|X_{t-1},\hdots,X_{t-r}) \nonumber
\\
=& \frac{\mathbb{P}(X_t \leq x_0|X_{t-1},\hdots,X_{t-r})\times \mathbb{P}(X_{t-r} \leq x_r|X_{t-1},\hdots,X_{t-r+1})}{\mathbb{P}(X_{t-r} \leq x_r|X_{t-1},\hdots,X_{t-r+1})}
\\
=& \frac{\mathbb{P}(X_t \leq x_0, X_{t-r} \leq x_r|X_{t-1},\hdots,X_{t-r+1})}{\mathbb{P}(X_{t-r} \leq x_r|X_{t-1},\hdots,X_{t-r+1})}
\end{align}
}
\SPP{\subsection{Conceptual Comparison to Similar Approaches}
\label{App:Subsection:Comparison}
\noindent We compare our proposed MAGMAR-model with the common-innovations copula model by \cite{joe2014dependence} and the infinite partial dependence model by \cite{mcneil2022time}. The common-innovations model has not yet been examined in detail, and there is no implementation. Therefore, some comparisons are not yet fully possible. A more in-depth examination and an implementation is a potential direction for future research. Nevertheless, we compare the models with regards to the following criteria:\\
\textbf{Flexibility}:
We discuss flexibility in terms of (partial) dependencies that the models can produce. That is, how flexible is each model in constructing $(U_t, U_{t-k} |\mathcal{F}_{t-1})$. The partial densities of the MAGMAR-model are derived in \ref{App:Subsection:JointConditionalDistribution}. Each partial density is a product of MAG and AR-copula densities. Given flexibility in these densities, the partial dependencies can be modeled flexibly too. Typically, however, a partial copula density of the (adjusted) MAGMAR-model will not be a 'standard' copula density. This means that complex dependencies can be modeled in arbitrary lags.
The IPD approach is also very flexible in modeling partial dependencies since they are modeled directly. For each $k \in \mathbb{Z}$, the copula of $(U_t, U_{t-k} |\mathcal{F}_{t-1:t-k+1})$ can be chosen. In practice, either only one copula family is allowed for all lags $k$ or all partial copulas are Gaussian except for a finite amount of non-Gaussian copulas for specified lags $k$ (see \cite{bladt2025semiparametric}). This approach can be arbitrarily flexible in the copula families. However, eventually, the partial copulas are Gaussian with increasing lag order $k$. Hence the main differences between the MAGMAR and the IPD model in terms of flexibility are that the long-range dependencies can differ. If long-range dependencies seem to approach Gaussianity and only some partial dependencies are non-linear, the IPD model is suitable. If long-range dependencies remain non-Gaussian (e.g. because of heavy-tailedness) the MAGMAR-Copula approach is suitable.
The common-innovations model is similar to the MAGMAR model and similar behavior can be expected.\\
\textbf{Computational demand:} The computational demand of all models is moderate with MAGMAR-copula models needing $1min$ on average to estimate a MAGMAR$(1,1)$ model with $1000$ observations. In order to estimate an adjusted MAGMAR$(1,1)$ model with two iterations, approximately $2min$ are necessary. The common-innovations model can be expected to have similar computational demand, however in this model no adjustment transformation is needed. The IPD model takes a few seconds for a similar estimation if all copulas are replaced. If only certain copulas are replaced the estimation takes around $15s$.\\
%
\textbf{Modeling Approach:} The suggested modeling approach for the presented adjusted-MAGMAR copula model is to rely on descriptive statistics, such as autodependence and scatter plots.
A similar approach can be pursued for the IPD model. For the common-innovations model, however, it is hard to perform model selection directly from descriptive statistics. This is because the AR-model is latent. Approaches for model selection in the common-innovations model may be explored in future research.
%
}

\section{Proofs}
\label{App:Section:Proofs}
\noindent
In this section we provide proofs for the propositions made in the manuscript.
\subsection{Proof of Proposition \ref{Prop:Relation_to_other_models}: Relation to other models}
\label{App:Subsection:Proof:Related models}
\begin{proof}
\SPP{We show the results step-by-step:}
\begin{itemize}
\item[i)] Let $X_t = \Phi^{-1}[ h_{\alpha}^{-1}(h_{\beta}^{-1}(w_t, w_{t-1}), \Phi(X_{t-1}))]$ for all $t$. We take the explicit form \SPP{of the (inverse) $h$-function associated to the normal copula from} \cite{smith2010modeling}: 
\begin{align*}
h_{\alpha}(u_1, u_2) =& \Phi ([\Phi^{-1}(u_1) - \alpha \Phi^{-1}(u_2)](1 - \alpha^2)^{-\frac{1}{2}}),
\\
(h_{\alpha}^{-1}(u_1, u_2) =& \Phi ( \Phi^{-1}(u_1)(1 - \alpha^2)^{\frac{1}{2}} + \alpha \Phi^{-1}(u_2)),
\end{align*}
where $\Phi$ is the CDF of the standard normal distribution.
Now we set
\begin{align*}
\text{LHS} =& h_{\alpha}(\Phi(X_t), \Phi(X_{t-1})
\\
=& \Phi ([\Phi^{-1}(\Phi(X_t)) - \alpha \Phi^{-1}(\Phi(X_{t-1}))](1 - \alpha^2)^{-\frac{1}{2}})
\\
=& \Phi([X_t - \alpha X_{t-1}](1-\alpha^2)^{-\frac{1}{2}}),
\\
\text{RHS} =& \Phi ( \Phi^{-1}(w_t)(1 - \beta^2)^{\frac{1}{2}} + \beta \Phi^{-1}(w_{t-1})).
\end{align*}
Setting LHS=RHS and $\Phi^{-1}(w_t) =: \varepsilon_t, \ \forall t \in \{1,\hdots,T\}$, where due to the quantile transformation $\varepsilon_t \stackrel{iid}{\sim} N(0,1)$, we obtain
\begin{align*}
(& X_t - \alpha X_{t-1}) (1 - \alpha^2)^{-\frac{1}{2}} = \varepsilon_t (1 - \beta^2)^{\frac{1}{2}} + \beta \varepsilon_{t-1},
\\
\iff & X_t - \alpha X_{t-1} = (1-\alpha^2)^{\frac{1}{2}}(1-\beta^2)^{\frac{1}{2}} \varepsilon_t + \beta (1 - \alpha^2)^{\frac{1}{2}} \varepsilon_{t-1},
\\
\iff & X_t - \alpha X_{t-1} = \tilde{\varepsilon}_t + \frac{\beta}{(1-\beta^2)^{\frac{1}{2}}} \tilde{\varepsilon}_{t-1}.
\end{align*}
This is the model equation for the ARMA($1,1$) model with AR-parameter $\alpha$, MA-parameter $\frac{\beta}{(1-\beta^2)^{\frac{1}{2}}}$ and white noise normal innovations with variance \SPP{$(1-\alpha^2)(1-\beta^2)$.}
\item[ii)] The $h$-function associated to the independence copula is $h_{\Pi}(u_1, u_2) = \frac{\partial (u_1 u_2)}{\partial u_2} =u_1$. The corresponding inverse $h$-function is $h_{\Pi}^{-1}(u_1, u_2) = u_1$. It follows that $h_{\theta}^{-1}(w_t,w_{t-1}) = w_t$, so the updating equation becomes $U_t = h_{\phi}^{-1}(w_t, U_{t-1}), w_t \stackrel{iid}{\sim} U(0,1)$, which is the updating equation of the classical copula-based time series model.
\item[iii)] \SPP{We follow the same steps as for ii) and see that $\{U_t\}_{t\in\mathbb{Z}}$ follows the updating equation $U_t = h_{\beta}(w_t, w_{t-1})$, which is the MAG$(1)$ updating equation.}
\item[iv)] Again we use that the inverse $h$-function associated to the independence copula is $h_{\Pi}^{-1}(u_1, u_2) = u_1$. From this it follows that the model updating equation is $U_t = w_t$, where $w_t \stackrel{iid}{\sim} U(0,1)$.
\end{itemize}
\end{proof}
\subsection{Proof of Proposition~\ref{Prop:MAG(1)_Distributional}: Distributional Properties of the MAG$(1)$-Copula Process}
\label{App:Subsection:Proof:Distributional_properties_MAG(1)}
%
\noindent The MAG(1) model is given by
\begin{align*}
U_t = h_{\theta}^{-1}(w_t, w_{t-1}), \ \ w_t \stackrel{iid}{\sim} U(0,1)
\end{align*}
The marginal distribution is standard uniform and the time series is stationary and ergodic. 
\begin{proof}
For $a \in (0,1)$, it holds that
\begin{align*}
\mathbb{P}(U_t \leq a) =& \mathbb{P}(h_{\theta}^{-1}(w_t, w_{t-1}) \leq a),
\\
=& \mathbb{E}( h_{\theta}(a, w_{t-1} )),
\\
=& \int_{0}^1 h_{\theta}(a, w_{t-1}) dw_{t-1} = a.
\end{align*}
For $a \leq 0, \ \mathbb{P}(U_t \leq a) = 0$ and for $a \geq 1, \ \mathbb{P}(U_t \leq a) = 1$ since $U_t \in (0,1)$. So $\mathbb{P}(U_t \leq a)$ is the distribution function of the uniform distribution. This \SPP{result} would also follow directly from \SPP{the consideration} that $U_t$ results from quantile transforming a uniform random variable w.r.t. a quantile function corresponding to a copula, conditioning on an independent uniform random variable. It is well-known that the resulting random variable is uniform. \SPP{Stationarity follows directly because the marginal distribution and the updating equation are time invariant. Ergodicity follows directly from the fact that the process is $1$-dependent.}
The joint distribution of $(U_t, U_{t-1})$ can be calculated as follows
\begin{align*}
F_{U_t, U_{t-1}}(a, b) =& \mathbb{P}(U_t \leq a, U_{t-1} \leq b)
\\
=& \mathbb{P}(h_{\theta}^{-1}(w_t, w_{t-1}) \leq a, h_{\theta}^{-1}(w_{t-1}, w_{t-2}) \leq b)
\\
=& \mathbb{E}(h_{\theta}(a, w_{t-1}) 1_{\{w_{t-1} \leq h_{\theta}(b, w_{t-2}\}} ) 
\\
=& \int_0^1 \int_0^{h_{\theta}(b, w_{t-2})} h_{\theta}(a, w_{t-1}) dw_{t-1} dw_{t-2}
\\
=& \int_0^1 C_{\theta}[a, h_{\theta}(b, w_{t-2}) ] dw_{t-2}
\end{align*}
Note that implicitly it was used that the marginal distribution of $U_t$ is standard uniform and time independent. Now using that the joint distribution of $(U_t,U_{t-1})$ is time independent, it can be shown that also $F_{U_t,U_{t-1},U_{t-2}}$ is time independent. This procedure can be generalized and then by induction it follows that the process is strictly stationary. (Furthermore it can be seen that $F_{U_t, U_{t-1}}(a, b)$ is increasing in $a$ and $b$, the triangle inequality is fulfilled and $F_{U_t, U_{t-1}}(1, b) = b$ as well as $F_{U_t, U_{t-1}}(a, 1) = a$. Hence $F_{U_t, U_{t-1}}(a, b)$ is a copula.) To establish ergodicity, notice that the process is $1$-dependent. Then ergodicity follows immediately.
\end{proof}
\subsection{Proof of Proposition~\ref{Prop:Convergence_normal_MAGMAR(1,0)}}
\label{App:Subsection:Proof:Convergence_MAGMAR(1,0)}
\begin{proof}
The inverse $h$-function associated to a normal copula with parameter $\phi$ is given as $h_{\phi}^{-1}(u_1,u_2) = \Phi(\Phi^{-1}(u_1)(1 - \phi^2)^{\frac{1}{2}} + \phi \Phi^{-1}(u_2))$ \citep{smith2010modeling}. Then $A_{t,n}$ can be written as
\begin{align*}
A_{t,n} = \Phi\left( (1- \phi^2)^{\frac{1}{2}} \sum_{i = 0}^{n} \phi^i \Phi^{-1}(w_{t-i}) \right).
\end{align*}
Since $w_t \stackrel{iid}{\sim} U[0,1]$ and $\phi^i$ is decreasing geometrically in $i$, the variance of the sum $\sum_{i = 0}^{n} \phi^i \Phi^{-1}(w_{t-i})$ is bounded and it converges almost surely \citep{billingsley1995probability}. The limiting random variable is distributed as $N(0, \frac{1}{1 - \phi^2})$. By the continuous mapping theorem and the continuity of $\Phi$ as well as the continuity of the multiplication with $(1-\phi^2)^{\frac{1}{2}}$ it follows that $A_{t,n}$ also converges almost surely. Since $(1- \phi^2)^{\frac{1}{2}} \sum_{i = 0}^{n} \phi^i \Phi^{-1}(w_{t-i}) \sim N(0,1)$, the limiting random variable is standard uniformly distributed.
\end{proof}
\subsection{Proof of Proposition~\ref{Prop:Stationarity_MAGMAR(1,1)_normal_AR}: Strict Stationarity of MAGMAR$(1,1)$ with normal AR-Copula}
\label{App:Subsection:Proofs:Stationarity_MAGMAR(1,1)_normal_AR}
\begin{proof}
Similar to the reasoning in the proof of \SP{proposition} \ref{Prop:Convergence_normal_MAGMAR(1,0)}, the MAG$(\infty)$ representation of $\{U_t\}$ can be written as
\begin{align*}
A_{t,n} = \Phi\left( (1- \phi^2)^{\frac{1}{2}} \sum_{i = 0}^{n} \phi^i \Phi^{-1}(h_{\theta}^{-1}(w_{t-i}, w_{t-i-1})) \right).
\end{align*}
Now notice that $\{V_t, \ V_t:=h_{\theta}^{-1}(w_t, w_{t-1})\}_{t \in \mathbb{Z}}$ is a MAG$(1)$ time series. From proposition \ref{Prop:MAG(1)_Distributional} it follows that $\{V_t\}_{t \in \mathbb{Z}}$ is strictly stationary, so $\{\Phi^{-1}(V_t)\}_{t \in \mathbb{Z}}$ is also strictly stationary. Since by proposition \ref{Prop:Convergence_normal_MAGMAR(1,0)} converges almost surely and thereby constitutes a measurable mapping, by Theorem 36.4 of \cite{billingsley1995probability} it follows that the process is strictly stationary.
\end{proof}
\subsection{Proof of Proposition~\ref{Prop:AR-Copula_Contraction}: Stationarity of MAGMAR$(1,1)$ based on a Contraction}
\label{App:Subsection:Proofs:AR_Copula_Contraction}
\begin{proof}
\SP{The stationarity and ergodicity of the time series $\{\kappa^{-1}(U_t)\}_{t\in\mathbb{Z}}$ follows directly from theorem 4.40 from \cite{douc2014nonlinear}, since assumptions A4.37 and A4.38 are fulfilled by assumption, and assumption A4.36 is fulfilled since the MAG$(1)$ time series is stationary and ergodic. Since $\kappa$ is a bijective and contemporaneous transformation, $\{U_t\}$ is also stationary.}
\end{proof}
\subsection{Proof of Proposition~\ref{Prop:Conditional_Density_MAGMAR(1,1)}: Conditional distribution of MAGMAR(1,1)}
\label{App:Subsection:Proof:ConditionalDistributionMAGMAR11}
\noindent
Let $g_{\mathbf{U}_{t-1}}:(0,1) \rightarrow (0,1)$ be given as
\begin{align*}
g_{\mathbf{U}_{t-1}}(x) =& h_{\phi}^{-1}(h_{\theta}^{-1}(x, w_{t-1}), U_{t-1}) \quad \text{where} \ U_{t-1}, w_{t-1} \ \text{are deterministic},
\end{align*}
It holds that $U_t|\mathcal{F}_{t-1} = g_{\mathbf{U}_{t-1}}(w_t)|\mathcal{F}_{t-1}$. Application of the density transformation theorem \citep[][Sect. 4.7]{devore2012modern} yields the result. The Jacobian of $g$ can be calculated as
\begin{align*}
\Rightarrow g^{-1}(x) =& h_{\theta}(h_{\phi}(x, U_{t-1}), w_{t-1})),
\\
\Rightarrow |\frac{d}{dx} g^{-1}(x)| =& c_{\theta}(h_{\phi}(x, U_{t-1}), w_{t-1}) c_{\phi}(x, U_{t-1}).
\end{align*}
Using that $w_t \sim U(0,1)$ and hence $f_{w_t}(x) = 1_{(0,1)}(x)$, the resulting conditional density for $U_t|\mathcal{F}_{t-1}$ is
\begin{align*}
f_{U_t|\mathcal{F}_{t-1}}(x) = c_{\theta}(h_{\phi}(x, U_{t-1}), w_{t-1}) c_{\phi}(x, U_{t-1}).
\end{align*}
For the adjusted MAGMAR-copula model, we follow the same procedure. In this case the function $g_{\mathbf{U}_{t-1}}$ is given as
\begin{align}
g_{\tilde{\mathbf{U}}_{t-1}}(x) = \Psi [ h_{\phi}^{-1}(h_{\theta}^{-1}(x, w_{t-1}), \Psi^{-1}(\tilde{U}_{t-1})) ]
\end{align} 
and consequently
\begin{align}
g^{-1}(x) =& h_{\theta}(h_{\phi}(\Psi^{-1}(x), \Psi^{-1}(\tilde{U}_{t-1})), w_{t-1})),
\\
\Rightarrow |\frac{d}{dx} g^{-1}(x)| =& c_{\theta}(h_{\phi}(\Psi^{-1}(x), \Psi^{-1}(\tilde{U}_{t-1})), w_{t-1}) \times c_{\phi}(\Psi^{-1}(x), \Psi^{-1}(\tilde{U}_{t-1})) \times \frac{1}{\psi(\Psi^{-1}(x))}
\end{align}
\subsection{Proof of proposition~\ref{Prop:MAG(1)_Consistency}}
\label{App:Subsection:Proof:MAG(1)_consistency}
\begin{proof}
We make use of the results from \cite{wooldridge1994estimation} for the asymptotic behavior of M-estimators. First we use Theorem 4.2 to establish the uniform law of large numbers for the score-time series, $\{l_t\}$, then we use Theorem 4.3 to establish the consistency of the ML-estimators.
\begin{itemize}
\item[1)] We check the conditions for Theorem 4.2 from \cite{wooldridge1994estimation}:
\begin{itemize}
\item[i)] $\Theta \subset \mathbb{R}^p$ is compact. This is true by assumption.
\item[ii)] $l_t(\theta;u,v) = \log c_{\theta}[u, v]$ is continuous in $u,v$ and $\theta$ and hence measurable.
\item[iii)] \SP{$l_t(\theta; (U_s)_{s\leq t})$ fulfills the WLLN. In order to establish the WLLN, we first establish the ergodicity of $\{\hat{w}_t\}$. Since the time series $\{\hat{w}_t\}$ fulfills assumptions from A4.36-A4.38 from \cite{douc2014nonlinear}, theorem 4.40 from the same reference is applicable. The theorem establishes the stationarity and ergodicity of $\{\hat{w}_t\}$. Additionally, $\{U_t\}$ is $1$-dependent. Then it follows that $\{l_t\}$ is stationary and ergodic and hence fulfills the WLLN.}
\item[iv)] There is $c_t((U_s)_{s \leq t})$ such that $|l_t(\theta_1; (U_s)_{s\leq t}) - l_t(\theta_2; (U_s)_{s\leq t})| \leq c_t((U_s)_{s \leq t}) ||\theta_1 - \theta_2||$ and $c_t((U_s)_{s \leq t})$ is subject to the WLLN. To see this notice that by \SPP{i) and ii)} it follows that $l_t(\theta; (U_s)_{s\leq t})$ is Lipschitz-continuous in $\theta$, i.e. there is $L>0$ such that $|l_t(\theta_1; (U_s)_{s\leq t}) - l_t(\theta_2; (U_s)_{s\leq t})| \leq L ||\theta_1 - \theta_2||$. Since $L$ is deterministic it trivially fulfills the WLLN.
\end{itemize}
All assumptions are met, it follows that $l_t(\theta; (U_s)_{s\leq t})$ fulfills the UWLLN.
\item[2)] We use Theorem 4.3 from \cite{wooldridge1994estimation} to establish the consistency of the ML-estimator $\hat{\theta}$. Note that in this set-up, there are no nuisance parameters, so the application of the theorem is simplified. Again, we check the conditions:
\begin{itemize}
\item[i)] $\Theta$ is compact,
\item[ii)] $l_t$ is continuous and measurable.
\item[iii)] $l_t$ fulfills the the UWLLN (as shown above).
\item[iv)] $\theta_0 = \argmin_{\theta \in \Theta}  \mathbb{E}(l_t(\theta))$ is unique by assumption.
\end{itemize}
It follows that the maximum likelihood estimator is consistent.
\end{itemize}
\end{proof}

\section{Additional Background Information}

\subsection{Copulas used in this Work}
\label{App:Subsection:Copulas_in_this_work}
\noindent \textbf{Normal Copula:} The Normal copula models linear dependencies. It is constructed by the inversion method using the multivariate normal distribution \citep{smith2021implicit, nelsen2006introduction},
\begin{align}
C^{\text{normal}}_{\Sigma}[u_1,\hdots,u_p] = \Phi_{\Sigma}(\Phi^{-1}(u_1),\hdots,\Phi^{-1}(u_d)].
\label{NormalCopula}
\end{align}
Here $\Phi_{\Sigma}$ denotes the multivariate normal distribution function with correlation matrix $\Sigma \in [-1,1]^{p \times p}$, symmetric and positive definite. $\Phi^{-1}$ denotes the quantile function of the univariate standard normal distribution. \\
\textbf{$t$-copula:} The $t$-copula models linear dependencies with heavy tails. Similar to the normal copula it is also constructed by the inversion method. It is given as \citep{demarta2005t},
\begin{align}
C^t_{\Sigma, \nu}[u_1,\hdots,u_p] = t_{\Sigma, \nu}(t_{\nu}^{-1}(u_1),\hdots, t_{\nu}^{-1}(u_p)),
\label{tCopula}
\end{align}
where $t_{\Sigma, \nu}$ denotes the multivariate t-distribution with correlation matrix $\Sigma \in [-1,1]^{p \times p}$, symmetric and positive definite, degrees of freedom $\nu > 0$ and $t_{\nu}^{-1}$ is the quantile function of the standard $t$-distribution with degrees-of-freedom $\nu$. For $\nu \rightarrow \infty$, the $t$-copula approaches the normal copula. For small $\nu$, the dependence structure given by the $t$-copula is heavy-tailed, meaning that extreme events coincide. \\
\textbf{Gumbel Copula:} The Gumbel copula is an Archimedean copula \citep[see][]{hofert2008sampling, genest1993statistical}, hence it is constructed via $C[u_1,\hdots,u_p] = \psi(\psi^{-1}(u_1)+\hdots+\psi^{-1}(u_p))$ by choosing a continuous and non-increasing generator function $\psi$. The generator function for the Gumbel copula is given as  $\psi(t) = \exp(-t^{\frac{1}{\rho}})$ \citep{hofert2008sampling}. The resulting copula is 
\begin{align}
C^{\text{Gumbel}}_{\rho} = \exp(-((-\ln(u_1))^\rho + \hdots + (-\ln(u_p))^{\rho})^{\frac{1}{\rho}}).
\label{GumbelCopula}
\end{align}
The Gumbel copula has non-zero upper tail dependence while having no lower tail dependence \citep[][Ex. 5. 22.]{nelsen2006introduction}, hence displaying an asymmetric dependence structure.\\
\textbf{Independence Copula:} The independence copula simply models independence. Accordingly it is given by the product of its arguments \citep{nelsen2006introduction},
\begin{align}
\Pi[u_1, \hdots, u_p] = \prod_{i=1}^{p} u_i.
\end{align}
Choosing $\Pi$ as the copula in Eq.~\ref{Sklar}, it can be seen, that the joint distribution is just the product of marginals. Hence, random variables having the independence copula as their copula implies independence and vice versa. For $\Sigma = I_{p \times p}$, the normal copula becomes the independence copula. For $\Sigma = I_{p \times p}$ and $\nu \rightarrow \infty$ the $t$-copula becomes the independence copula and for $\rho = 1$, the Gumbel copula becomes the independence copula.
\section{Additional Plots}
\label{App:Section:Additional_Plots}
\begin{figure}
\centering
\includegraphics[scale=0.2]{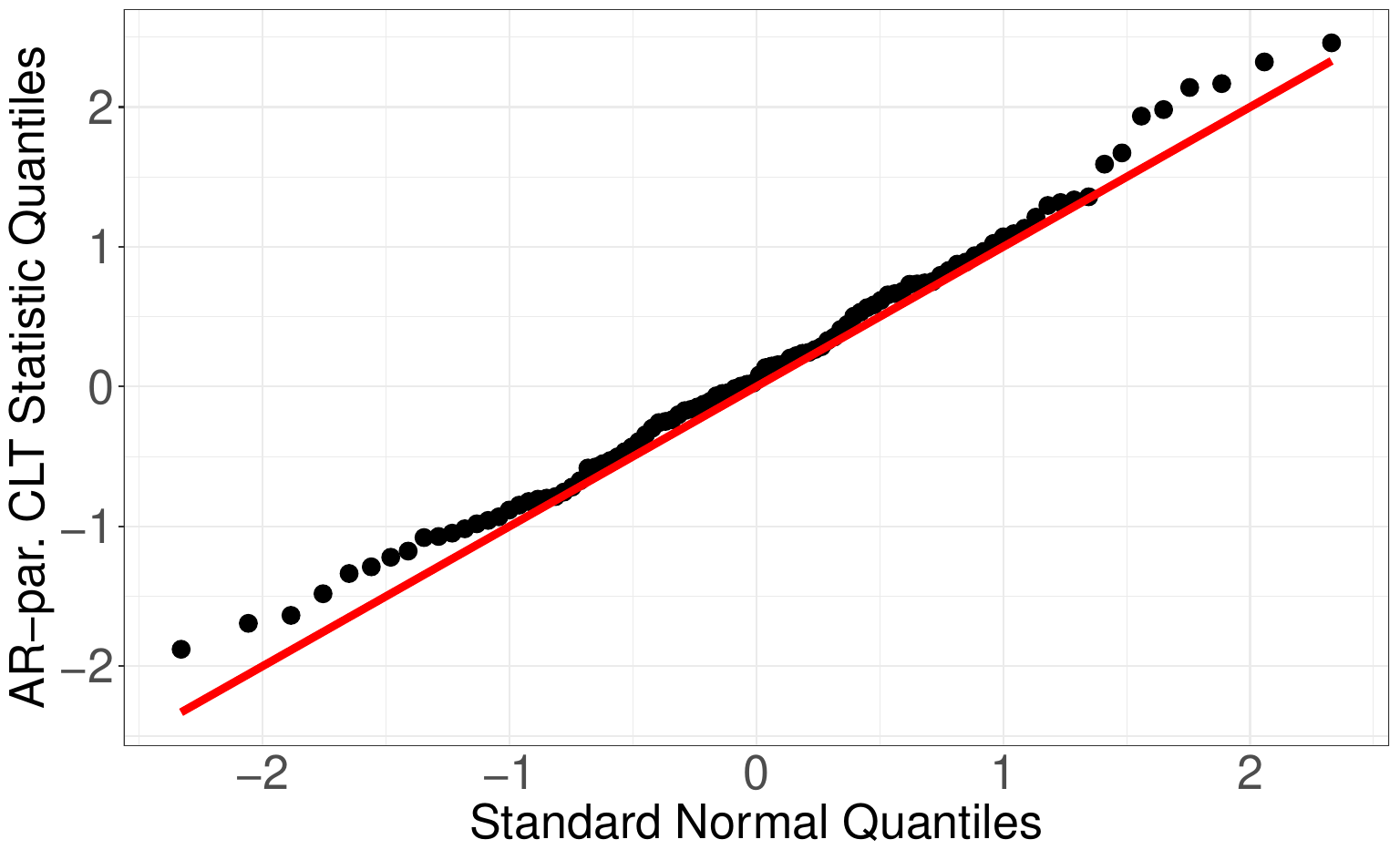}
\includegraphics[scale=0.2]{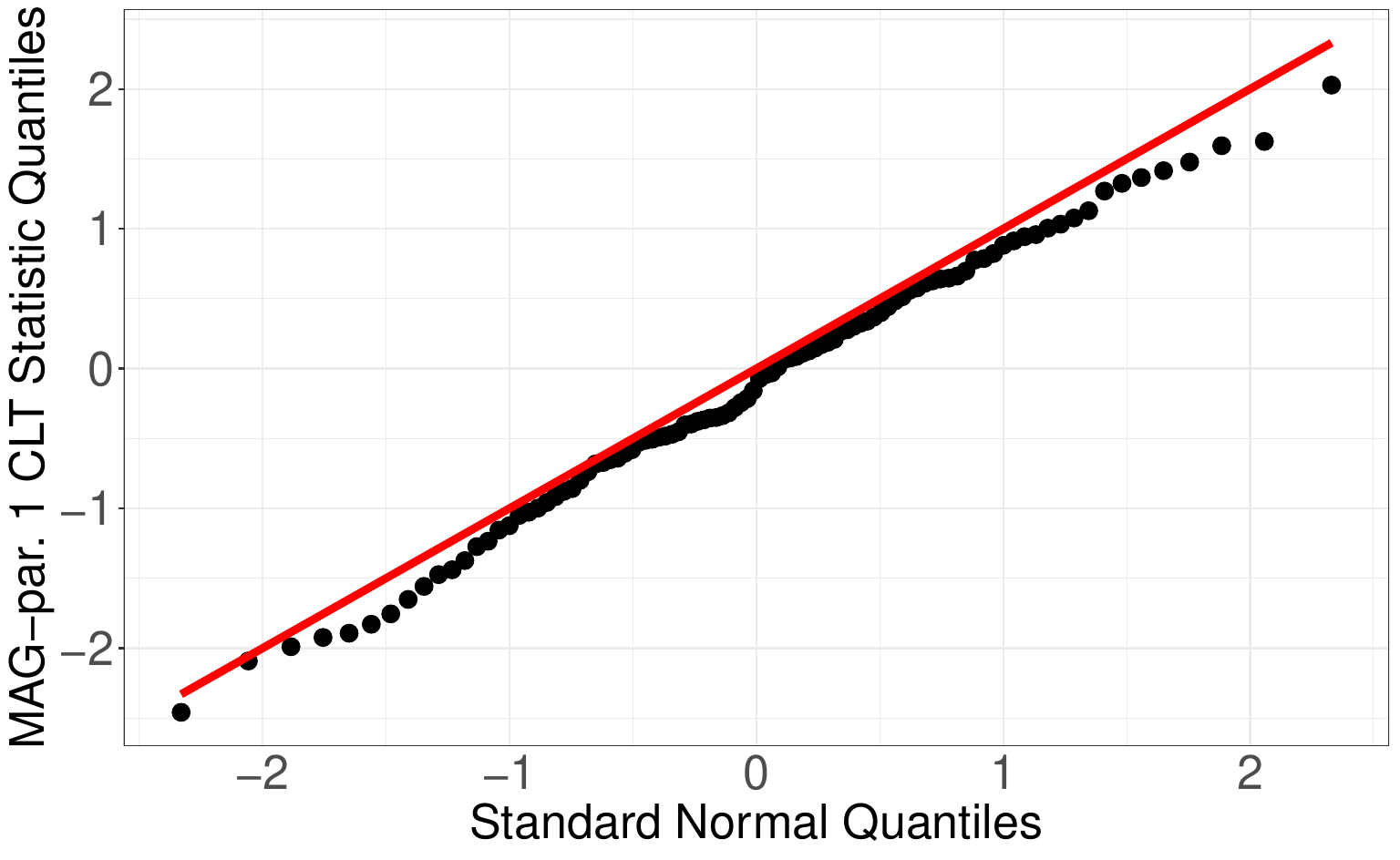}
\includegraphics[scale=0.2]{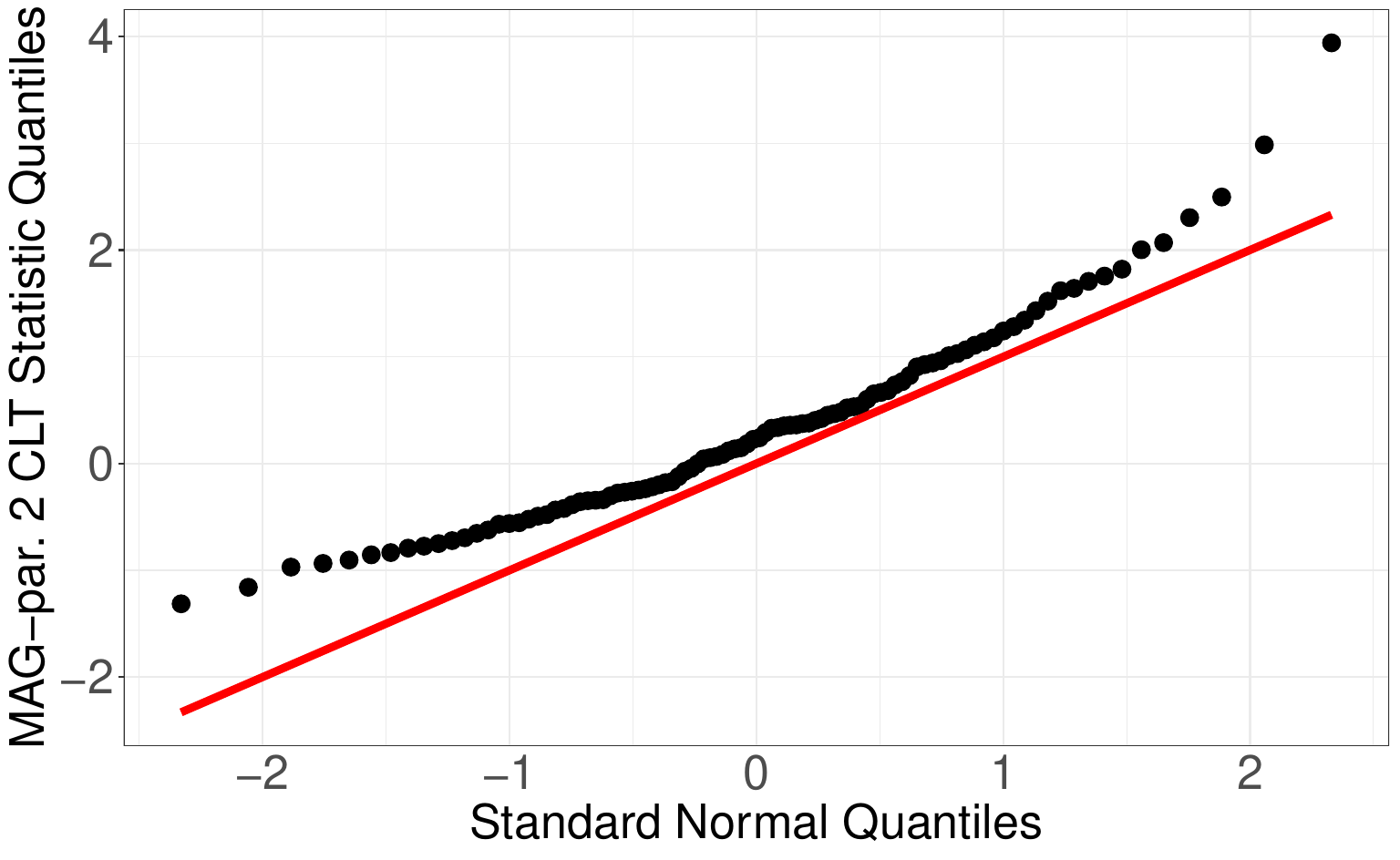}
\caption{QQ-plot of the CLT-statistic of the first MAG-parameter estimator in the $\Psi^{(2)}$-MAGMAR$(1,1)$-(g)-(t) model from the simulation study in Sect.~\ref{Subsection:Simulation_Study:Asymptotics}.}
\label{App:Fig:QQplot:MAG(1)_par1}
\end{figure}


\begin{figure}
\centering
\includegraphics[scale=0.3]{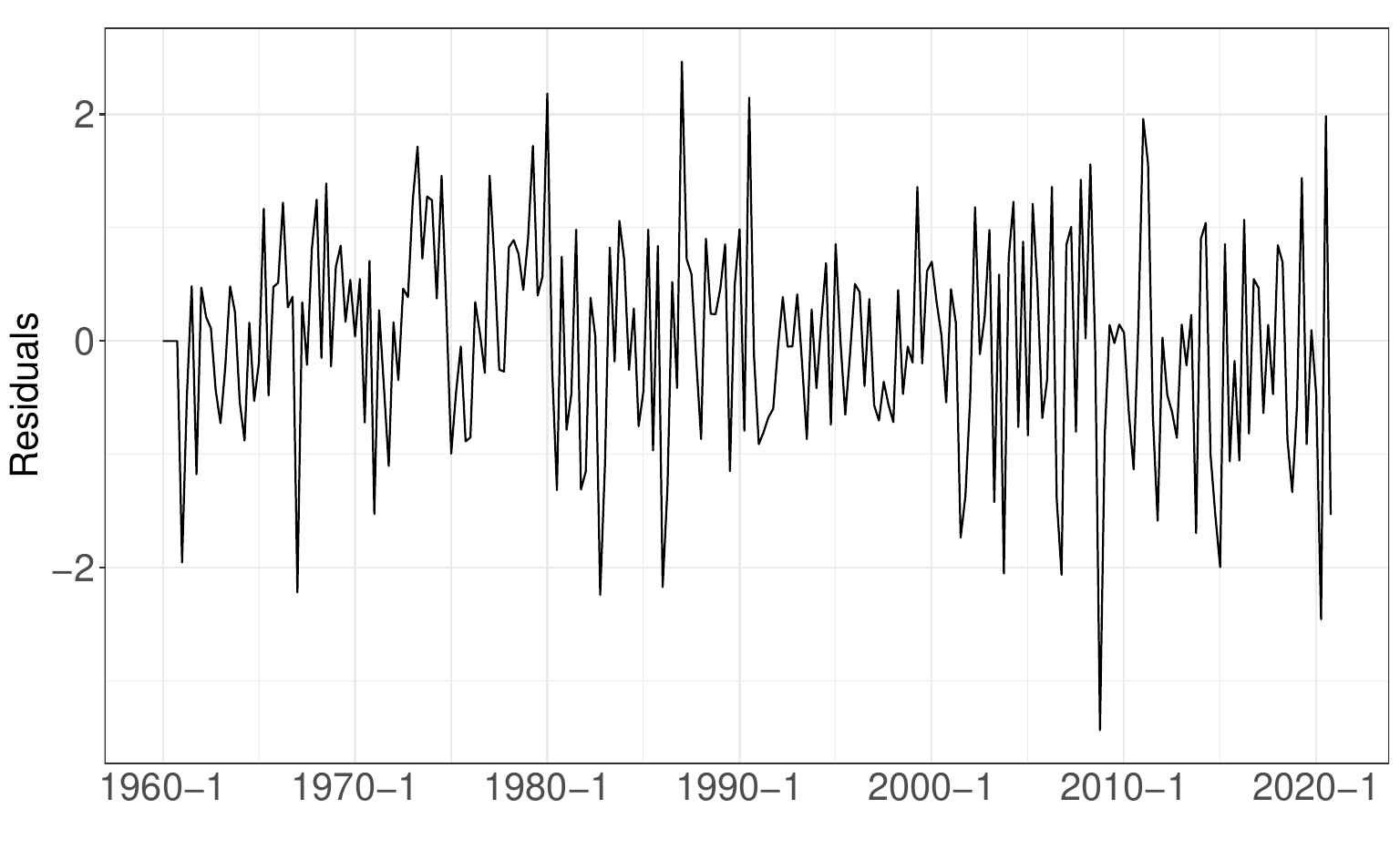}
\includegraphics[scale=0.3]{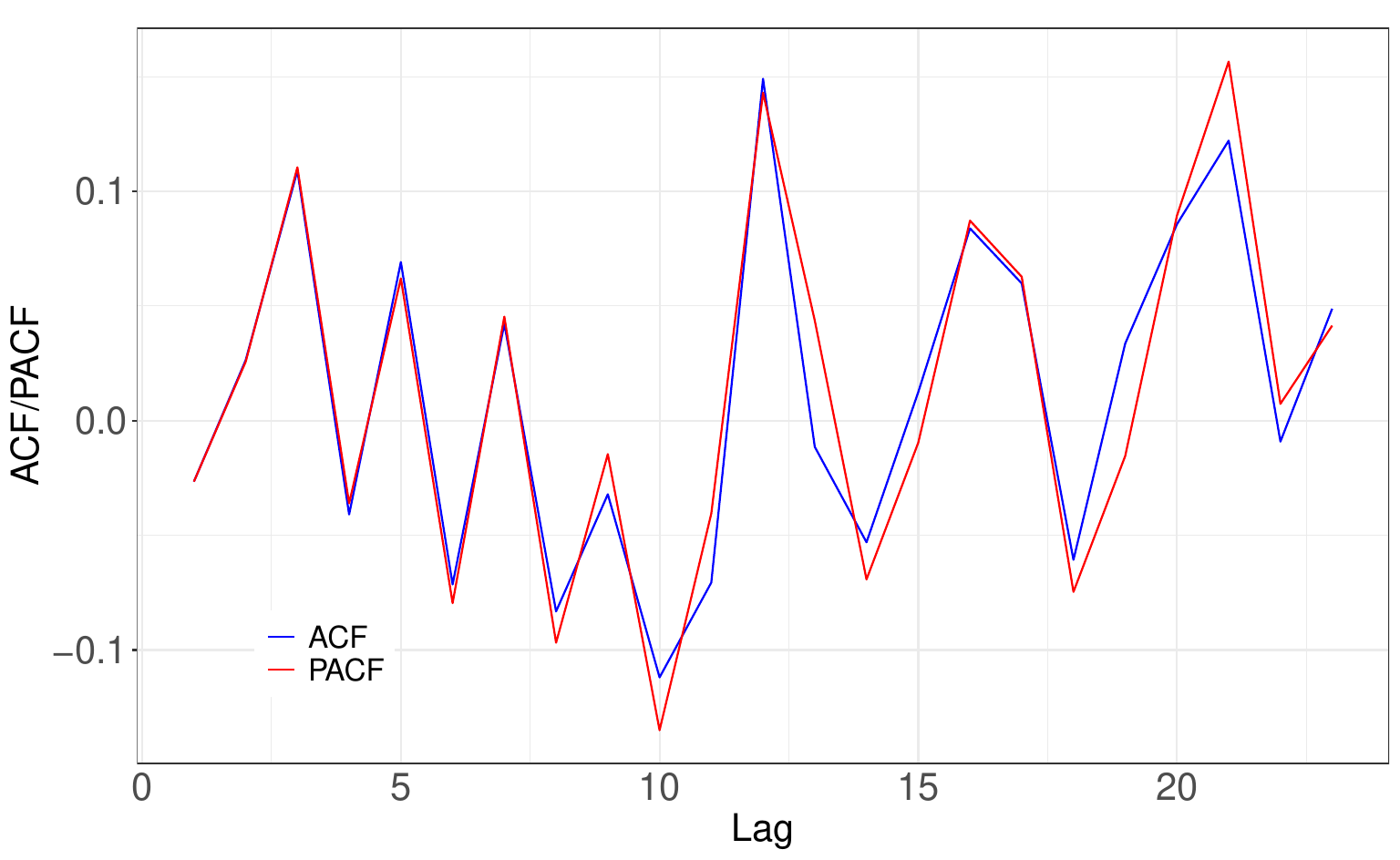}
\caption{Residuals and their ACF/PACF of the $\Psi^{(2)}$-MAGMAR$(4,1)$-(g,i,n,g)-(t) model on US inflation. The residuals were transformed to the $z$-scale by applying the standard normal quantile function.}
\label{App:Fig:Inflation_residual}
\end{figure}

\end{document}